\newtheorem{theorem}{Theorem}[section]
\newtheorem{lemma}[theorem]{Lemma}
\newtheorem{proposition}[theorem]{Proposition}
\newtheorem{remark}[theorem]{Remark}
\begin{document}

\title{Valuing Exchange Options Under an Ornstein-Uhlenbeck Covariance Model}
\author{Enrique Villamor, Florida International University, Pablo Olivares, Ryerson University}
\maketitle

\begin{abstract}
In this paper we study the pricing of exchange options under  a dynamic described by stochastic correlation with random jumps. In particular, we consider  a Ornstein-Uhlenbeck  covariance model in the lines of the model proposed in \cite{BN-nicolato} with Levy Background Noise Process  driven by Inverse Gaussian subordinators. We use  expansion in terms of Taylor polynomials and cubic splines to approximately compute  the price of the derivative contract. Our findings show that this approach provides an efficient way to compute the price when compared  with a Monte Carlo method while maintaining an equivalent degree of accuracy with the latter.
\end{abstract}

\date{Today}

\section{Introduction}
In this paper we study the pricing of exchange options when the underlying assets have stochastic correlation with random jumps. Specifically, we consider  an Ornstein-Uhlenbeck  covariance process  with Background Noise Levy Process(BNLP)  driven by Inverse Gaussian subordinators. In order  to calculate the price of the derivative contract we use  expansions of the conditional price in terms of  Taylor and cubic spline polynomials and compare the results with a  computationally expensive Monte Carlo method.\\
To our knowledge the problem of pricing exchange derivatives under such model has not being studied so far.\\
 The exchange of two assets can be used to  hedge against the changes in price of  underling assets by betting on the difference between both. The price of such instruments has been first considered in \cite{marg} under a bivariate Black-Scholes model, where a closed-form formula for the pricing is provided.  The results have been extended in \cite{chea, chea2} to  price  the exchange in the case of a  jump-diffusion model, while  \cite{exchir} have considered the pricing of the derivative under  stochastic interest rates.\\
    On the other hand  it is well known that constant correlation, constant volatilities  and continuous trajectories  are features not supported by empirical evidence. Some dynamic stochastic processes for the covariance have been previously proposed, see for example \cite{Fonseca} for the popular Wishart model, \cite{pigor} for an Ornstein-Uhlenbeck Levy type model and \cite{lipcsv} for  a simple  model based on a linear combination of Cox-Ingersol-Ross processes and finally an extension of \cite{barn2}  to a multivariate setting proposed in \cite{BN-nicolato}. We study its integrated characteristic function, moments and pricing  under the latter.\\
  As a closed-form pricing formula is not available when stochastic covariance and random jumps  are considered, approximations based on  polynomial expansions of the  price, after conditioning on the former, allow for  efficient and accurate calculations.\\
 Starting with a pioneer idea in \cite{hullwhitetaylor}, Taylor developments have been taken into account to compute the price of spread options and other multivariate contracts. For example, a second order Taylor  expansion has been successfully used in \cite{lidengzhou, lidengzhou2010} to price spread options under a multivariate Black-Scholes model. \\
  As a closed-formula  for exchange options is available in a Black-Scholes setting and it is possible, based on the knowledge of the  cumulated characteristic function, to compute mixed moments for the integrated covariance model, an approach following the same idea seems feasible to  be applied for the case studied in this paper. Moreover,  other   polynomial expansions such as a type I Chebyshev family can be considered, see  \cite{oliv},to obtain a uniform and more accurate approximation. See  \cite{grass} for an application of Chebyshev polynomials in other context of models and derivatives.\\
  Our approach is in essence, a combination of conditioning, polynomial expansions, FFT inversion and the existence of a closed-form for the price in a Black-Scholes setting that allows to value exchange options under a more realistic model with stochastic correlation in the underlying assets.\\
 The organization of the paper is the following:\\
In section 2 we introduce the main notations and discuss  the pricing of  the  exchange option by  polynomial expansions. In section 3 we define the Ornstein-Uhlenbeck covariance model following \cite{BN-nicolato}, compute the characteristic function of the integrated process and its moments. In section 4 we discuss algorithms and implementation of the method, while  numerical results allowing a comparison between the price obtained by Monte Carlo and polynomial approximations are shown in section 5. Proof of the theoretical results are deferred to the appendix.
 \section{Pricing exchange options in models with stochastic covariance }
Fist, we introduce some notations. We denote by $C_l$ a  matrix having  ones in position $(l,l)$ and zeros otherwise. For a matrix $A$ its trace is denoted by $tr(A)$ and its transpose by $A'$. For a vector $V$ the expression $diag(V)$ denotes a diagonal  matrix whose elements in the diagonal are the components of $V$. For two vectors $x$ and $y$, $xy$ represents its scalar product.\\
When $l$ is an integer number, $D^{l}$ represents the $l$-th order derivative operator. To simplify notations we make $D^1=D$. \\
Let  $(\Omega ,\mathcal{F},\mathcal{P}, (\mathcal{F}_{t})_{t \geq 0})$ be a filtered probability space. We denote  by $\mathcal{Q}$ an equivalent martingale  measure(EMM), and by  $r$ the (constant) interest rate or a vector with components equal to $r$.  The filtration $(\mathcal{F}_{t})_{t \geq 0}$ is assumed to verify the usual conditions,  i.e. they are right-continuous containing all events of probability zero.\\
 The $\sigma$-algebra  $\mathcal{F}^{X_t}$ is defined for any $t>0$ as  the  $\sigma$-algebra generated by the random variables $(X_s)_{ 0 \leq s \leq t}$.\\
 Also,  we define the increments of the process $(X_t)_{ t \geq 0}$ as $\Delta X_t= X_t-\lim_{s \uparrow t} X_s$. For two squared integrable semi-martingales $X$ and $Y$, $<X,Y>$ defines their quadratic covariation process. The functions $\varphi_X(u)$ and $\varphi_X(u, a, b)$ represent respectively the characteristic function of the random variable $X$ and the characteristic function of the random variable constrained to the interval $[a,b]$, both  under the chosen EMM.\\
 A  two-dimensional adapted stochastic process $(S_t)_{t \geq 0}=(S^{(1)}_t,S^{(2)}_t)_{t \geq 0}$, where their components represent prices of certain assets, is defined on the filtered probability space.\\
 We describe the processes of prices as follows:
 \begin{equation}\label{eq:prices1}
 S^{(j)}_t=S^{(j)}_0 exp(Y^{(j)}_t)\;\;j=1,2.\;\;
 \end{equation}
where $Y=(Y^{(1)}_t,Y^{(2)}_t)_{t \geq 0}$ is the process of log-prices.\\
We assume that the process of  log-prices  has a dynamic under  $\mathcal{Q}$  given by:
\begin{equation}\label{dynvector}
   dY_t=(r-q-\frac{1}{2} diag( \Sigma_t)) dt+ \Sigma_t^{\frac{1}{2}}  dB_t
\end{equation}
 while $(\Sigma _t^{1/2})_{0 \leq t \leq T}$ is a matrix-valued stochastic process  such that \\$\Sigma _t^{1/2} \Sigma _t^{1/2}=\Sigma _t$. Its components are $(\sigma_t)_{jk}$ for $0 \leq j,k \leq 2$.\\
 Under $\mathcal{Q}$, the process $(B_t)_{t \geq 0}=(B_t^{(1)},B_t^{(2)})_{t \geq 0}$ is a two-dimensional standard Brownian motion  with independent components.  The vector $q=(q_1,q_2)$ represents dividends on both assets.
The conditional joint distribution of $Y_T$ and its characteristic function are given in the  elementary lemma below.
\begin{lemma}\label{returndistb}
Let  $(S_t)_{0 \leq t \leq T}$ be a process driven by equations (\ref{eq:prices1}) and (\ref{dynvector}) under an EMM $\mathcal{Q}$. Then, conditionally on $\mathcal{F}^{\Sigma_T}$, the random variable $Y_T$ follows a bivariate normal distribution. More precisely:
 \begin{equation*}\label{eq:normaltdep}
Y_{T}\sim N \left((r-q)T-\frac{1}{2}diag(\Sigma^+_T),  \Sigma^+_T  \right)
\end{equation*}
where $\Sigma^+_T=(\sigma^+_T)_{jk}$ has components
\begin{equation*}
  (\sigma^+_T)_{jk}=\int_0^T (\sigma_t)_{jk}\;dt,\;\;\text{for} \;0 \leq j,k \leq 2
\end{equation*}
In particular, for a constant covariance process  $\Sigma^{+}_T=\Sigma T$.\\
Moreover, the  characteristic function of $Y_t$ is:
\begin{equation*}\label{eq:chfy}
  \varphi_{Y_T}(u)=exp(iu(r-q)) \varphi_{\Sigma^+_T}(-\frac{1}{2}\theta(u))
\end{equation*}
where:
\begin{equation*}
  \theta(u)= \left(\begin{array}{cc}
              u_1(1-i u_1) &  -i u_1 u_2 \\
              -i u_1 u_2 & u_2(1-i u_2)
            \end{array}
            \right)
\end{equation*}
\end{lemma}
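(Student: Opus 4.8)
The plan is to integrate the dynamics (\ref{dynvector}) over $[0,T]$ and then condition on the trajectory of the covariance process, after which the only surviving randomness is a Wiener integral with deterministic integrand. Since (\ref{eq:prices1}) gives $Y_0=0$, integrating yields
\[
Y_T=(r-q)T-\tfrac12\int_0^T diag(\Sigma_t)\,dt+\int_0^T \Sigma_t^{1/2}\,dB_t
 =(r-q)T-\tfrac12\,diag(\Sigma^+_T)+M_T ,\qquad M_T:=\int_0^T \Sigma_t^{1/2}\,dB_t ,
\]
where I used $\int_0^T diag(\Sigma_t)\,dt=diag(\Sigma^+_T)$ componentwise. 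The drift term is $\mathcal{F}^{\Sigma_T}$-measurable by the very definition of $\Sigma^+_T$, so the whole statement reduces to identifying the conditional law of $M_T$ given $\mathcal{F}^{\Sigma_T}$ and then unwinding a characteristic-function computation.

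For the conditional law I would use the standing assumption of the model that the Brownian motion $B$ is independent of the covariance process $\Sigma$ (equivalently, of the Lévy background noise process driving it). This independence implies that $B$ remains a Brownian motion with respect to the enlarged filtration $\big(\mathcal{F}^{B_t}\vee\mathcal{F}^{\Sigma_T}\big)_{t\ge 0}$, so that after conditioning on $\mathcal{F}^{\Sigma_T}$ the map $t\mapsto\Sigma_t^{1/2}$ is a deterministic, square-integrable, matrix-valued function and $M_T$ becomes an ordinary Wiener integral. Invoking the classical fact that $\int_0^T f(t)\,dB_t$ with $f$ deterministic is a centred Gaussian vector of covariance $\int_0^T f(t)f(t)'\,dt$, combined with a conditional Fubini / freezing-the-path argument, gives that conditionally on $\mathcal{F}^{\Sigma_T}$,
\[
M_T\sim N\!\left(0,\ \int_0^T \Sigma_t^{1/2}\big(\Sigma_t^{1/2}\big)'\,dt\right)=N\big(0,\Sigma^+_T\big),
\]
where the last equality uses that $\Sigma_t^{1/2}$ is the symmetric square root, hence $\Sigma_t^{1/2}(\Sigma_t^{1/2})'=\Sigma_t$, together with linearity of the integral in the matrix entries. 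Adding back the measurable drift gives the asserted conditional bivariate normal distribution, and the constant-covariance specialization $\Sigma^+_T=\Sigma T$ is immediate.

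For the characteristic function I would apply the tower property, $\varphi_{Y_T}(u)=E\big[\,E[e^{iu'Y_T}\mid\mathcal{F}^{\Sigma_T}]\,\big]$, and substitute the Gaussian conditional characteristic function just found, namely $\exp\!\Big(iu'(r-q)T-\tfrac i2\,u'diag(\Sigma^+_T)-\tfrac12\,u'\Sigma^+_T u\Big)$. It then remains to verify the elementary identity
\[
-\tfrac i2\,u'diag(\Sigma^+_T)-\tfrac12\,u'\Sigma^+_T u=-\tfrac i2\,tr\big(\theta(u)\,\Sigma^+_T\big),
\]
obtained by expanding both sides in $(\sigma^+_T)_{11},(\sigma^+_T)_{22},(\sigma^+_T)_{12}$ and matching coefficients using the explicit form of $\theta(u)$. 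Since the characteristic function of the symmetric matrix-valued variable $\Sigma^+_T$ is $\varphi_{\Sigma^+_T}(\Theta)=E[\exp(i\,tr(\Theta\Sigma^+_T))]$, taking the outer expectation produces exactly $e^{iu'(r-q)T}\,\varphi_{\Sigma^+_T}\!\big(-\tfrac12\theta(u)\big)$, as claimed.

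The computation is essentially routine; the one genuinely delicate point is the conditional-Gaussianity step, where it is precisely the independence of $B$ and $\Sigma$ — and the consequent preservation of the Brownian property under the enlarged filtration — that legitimizes treating $M_T$ as a Wiener integral against a frozen deterministic integrand. The ensuing identification of the scalar quadratic form with the trace expression defining $\theta(u)$ is only a matter of careful bookkeeping and poses no real obstacle.
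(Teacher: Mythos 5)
Your proposal is correct and follows essentially the same route as the paper: integrate the log-price dynamics, condition on $\mathcal{F}^{\Sigma_T}$ to identify the stochastic integral as a centred Gaussian vector with covariance $\Sigma^+_T$, and then obtain the characteristic function by the tower property together with the trace identity defining $\theta(u)$. The only (cosmetic) difference is that you justify the conditional Gaussianity via the independence of $B$ and $\Sigma$ and the frozen-integrand Wiener-integral argument, whereas the paper reads off the conditional covariance from the quadratic covariation brackets; your version is, if anything, slightly more explicit about the independence assumption the paper leaves implicit.
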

\begin{proof}
From equation (\ref{dynvector}) we have:
\begin{eqnarray}\label{dyn-logretint}
 Y_T&=&(r-q)T-\frac{1}{2}  diag(\Sigma^+_T)+\int_0^T \Sigma_t^{1/2}  dB_t
\end{eqnarray}
The third term in the equation above follows a bivariate normal distribution, conditionally on $\mathcal{F}^{\Sigma_T}$, with zero mean and elements of the covariance matrix given by:
\begin{eqnarray*}
&& Var \left((\int_0^T (\Sigma_t^{1/2})  dB_t)_j/\mathcal{F}^{\Sigma_T} \right)= \left <\int_0^T (\Sigma_t^{1/2})_{j1}  dB^{(1)}_t+\int_0^T (\Sigma_t^{1/2})_{j2}  dB^{(2)}_t \right>\\
&=& \int_0^T [(\Sigma_t^{1/2})^2_{j1}+(\Sigma_t^{1/2})^2_{j2}]dt=\sigma^{jj+}_T,\; j=1,2.
\end{eqnarray*}
Similarly:
\begin{eqnarray*}
 cov \left (\left(\int_0^T \Sigma_t^{1/2} dB_t \right)_1,\left(\int_0^T \Sigma_t^{1/2} dB_t \right)_2/ \mathcal{F}^{\Sigma_T}\right)   &=& \sigma^{12+}_T
\end{eqnarray*}
On the other hand, from equation (\ref{dyn-logretint}) and the conditional normality of the log-prices:
     \begin{eqnarray*}
       \varphi_{Y_T}(u) &=& E_{\mathcal{Q}}\left[ E_{\mathcal{Q}} \left( exp(iu Y_T)/ \mathcal{F}^{\Sigma_T} \right)\right]\\
       &=& exp(iu(r-q)) E_{\mathcal{Q}}  \left[ exp(-\frac{1}{2}i u\; diag(\Sigma^{+}_T)-\frac{1}{2}u   \Sigma^{+}_T u')\right]\\
               &=& exp(iu(r-q)) \varphi_{\Sigma^+_T}(-\frac{1}{2}\theta(u))
     \end{eqnarray*}
\end{proof}
 The payoff of a European  exchange option, with maturity at time $T>0$ is
\begin{equation*}\label{eq:payoffspread}
      h(Y_T)=(cS^{(1)}_0 exp(Y^{(1)}_t)-m S^{(2)}_0 exp(Y^{(2)}_t))_+
\end{equation*}
where $m$ is the number of assets of type two exchanged against $c$ assets of type one.\\
A closed-form  formula for the  price of an exchange  under a bivariate Black-Scholes model, i.e. the model given by equation (\ref{dynvector}) with a constant covariance, starting at $t=0$ has been found in \cite{marg}. This price, called \textit{Margrabe price}, is denoted by $C_M:=C_M(\Sigma)$.\\
On the other hand, the price of the exchange option under the full model, i.e. the one driven  by equation equation (\ref{dynvector}), after conditioning on $\mathcal{F}^{\Sigma_T}$ is denoted by $C_{MT}:=C_{MT}(\Sigma_T^+)$.\\
Notice that when conditioning on the covariance process the price process becomes a bivariate Gaussian model with deterministic and time-dependent volatilities and correlation.\\
Both prices are related by $C_{MT}(\Sigma_T^+)=C_M(\frac{1}{T}\Sigma_T^+)$, as the price of the later is equivalent to the Margrabe price with constant  covariance matrix $\frac{1}{T}\Sigma^+_T$.  \\
Additionally, we denote by $C_{MS}$ the unconditional price of the contract.  To be more precise, the price of the exchange under the model (\ref{dynvector})  is  given by:
\begin{equation} \label{eq:pricepi}
 C_{MS}= E_{\mathcal{Q}} \left[ C_{MT} \left( \Sigma _{T}^{+}\right) \right]
\end{equation}
where:
\begin{equation}\label{eq:condprice}
C_{MT} \left( \Sigma _{T}^{+}\right)= exp^{-rT} E_{\mathcal{Q} }\left[ (c S_0^{(1)}exp(-(r-q_1)T)exp(Y_T^{(1)})-mS_0^{(2)}exp(-(r-q_2)T) exp(Y_T^{(2)}))_+|\mathcal{F}^{\Sigma_T }\right]
\end{equation}
is the price of the exchange contract after conditioning on $\mathcal{F}^{\Sigma_T}$.\\
From the   remark above and lemma \ref{returndistb} a simple extension of  Margrabe formula to the case of time-dependent deterministic covariance is given by:
\begin{equation*}\label{eq:pricemag}
    C_{MT}(\Sigma^+_T)=ce^{-(r-q_1)T}S^{(1)}_0 N(d_1)-me^{-(r-q_2)T}S^{(2)}_0 N(d_2)
\end{equation*}
\begin{eqnarray*}
    d_1&=&\frac{\log \left(\frac{c S^{(1)}_0}{m S^{(2)}_0} \right)+(q_1-q_2)T+\frac{1}{2}v^+_T }{\sqrt{v^+_T} }\\
    d_2&=&\frac{\log \left(\frac{c S^{(1)}_0}{m S^{(2)}_0} \right)+(q_1-q_2)T-\frac{1}{2}v^+_T }{\sqrt{v^+_T} }=d_1-\sqrt{v^+_T}
\end{eqnarray*}
with
$v^+_T=\sigma_T^{11+}+\sigma_T^{22+}-2 \sigma^{12+}_T$.
\begin{remark}
The conditional Margrabe price $C_{MT}$ depends on $\Sigma^+_T$ through the quantity $v^+_T$. Consequently we write $ C_{MT}(\Sigma^+_T)= C_{MT}(v^+_T)$.
\end{remark}
\subsection{Pricing by polynomial expansions}
In the general case there is not analogous to Margrabe pricing formula.  It is possible to approximate the price of the exchange by a suitable expansion of $C_{MT} \left( v^+_T \right)$  in  terms of Taylor polynomials around a point $v^*$, typically  around the mean value of the integrated process given by $v^*= E_{\mathcal{Q}} (v^+_T)$, or using a family of polynomials such as first type Chebyshev functions or cubic splines.  \newline
i)Taylor approximation.\\
  The one dimensional Taylor expansion of $n$-th order,  denoted $C_{MS}(v, v^*)$, around the value $v*$ is given by:
\begin{equation*}\label{eq:taylorvstar}
  C_{MS}(v,v^*)=\sum_{l=0}^n \frac{D^{l}C_{MT}(v^*)}{l!}(v-v^*)^l
\end{equation*}
A Taylor approximation of the price, taking into account equation (\ref{eq:pricepi}), is defined by:
\begin{equation}\label{eq:taylorapproxuni}
  \hat{C}_{MS}^{(n)}(v^*)=\sum_{l=0}^n \frac{D^{l}C_{MT}(v^*)}{l!}E_{\mathcal{Q}}(v^+_T-v^*)^l
\end{equation}
\begin{remark}
Notice that, in order to implement the approximation above we need the derivatives of the $C_{MT}(v, v^*)$  up to order $n$ and the mixed moments of the components in the integrated covariance matrix $\Sigma^+_T$. \\
\end{remark}
\begin{remark}
Sensitivities with respect to the parameters in the contract can be obtained in a similar way. For example, approximations of the \textit{deltas} are:
\begin{equation*}
  \frac{\partial \hat{C}_{MS}^{(n)}}{\partial s^{(j)}}(v^*)=\sum_{l=0}^n \frac{D^{l}}{l!}  \frac{\partial C_{MT}(v^*)}{\partial s^{(j)}}  E_{\mathcal{Q}}(v^+_T-v^*)^l,\;\;j=1,2.
\end{equation*}
\end{remark}
  ii) Approximation by cubic splines.\\
 On an interval $[a,b]$  we consider a partition $a=v_0 \leq v_1 \leq \ldots \leq  v_N  \leq b$.\\
   An approximation of $C_{MT}$ based on cubic splines is thus given by:
 \begin{eqnarray}\label{eq:splpoly}
  C^{spl}(v) &=& \sum_{j=0}^{N-1} \sum_{l=0}^3 \alpha_{l,j}  1_{[v_j,v_{j+1})} v^l
  \end{eqnarray}
  The coefficients $\alpha_{l,j}$ depend on the partition.  \\
  Additional conditions on the derivatives to smooth these curves are usually imposed. Namely, $D_{-}C_{MT}(v_j)= D_{+}C_{MT}(v_j), j=1,2,\ldots,N$, $l=1,2$, where $D_{-}C_{MT}(v_j)$ and $D_{+}C_{MT}(v_j)$ are respectively the derivatives from the left and the right of the function $C_{MT}$ at point $v=v_j$.\\
   Moreover, for end points in the interval we set $D^2 C_{MT}(a)=D^2 C_{MT}(y_N)=0$. See \cite{arc} for a general account on splines and its implementation.\\
    On the other hand this approach requires  the constrained moments of $v^+_T=tr(M \Sigma^+_T)$ up to order $n$ . To this end we first compute  the corresponding characteristic function of the covariance process constrained to $[a,b]$ by using Fourier inversion formula. Notice that:
 \begin{eqnarray}\nonumber
  \varphi_{v^+_T}(u)&=& \varphi_{\Sigma^+_T}(Mu)\\ \label{eq:chfconstnt}
   \varphi_{v^+_T}(u,a,b)&=& \varphi_{\Sigma^+_T}(Mu,a,b)=E_{\mathcal{Q}} \left[exp(i tr(M u \Sigma^+_T))1_{[a,b]}(tr(M  \Sigma^+_T))\right]
 \end{eqnarray}
 where the matrix $M$ is:
 \begin{equation*}
   M=\left( \begin{array}{cc}
                    1 & -1 \\
                    -1 & 1
                  \end{array}
   \right)
 \end{equation*}
   The constrained moments of $v^+_T$ can be obtained by differentiating equation (\ref{eq:chfconstnt}) with respect to $u$ and evaluating at $u=0$.\\
 Now, we replace the function $C_{MT}(v)$ in equation (\ref{eq:pricepi}) by its approximation given in equation (\ref{eq:splpoly}) to obtain the following estimated price:
   \begin{equation}\label{eq:pricespline}
   C^{spl}= \sum_{j=0}^{N-1} \sum_{l=0}^3 \alpha_{l,j} \tilde{m}_{v^+_T}(l,v_j,v_{j+1})
 \end{equation}
 where:
 \begin{equation}\label{}
   \tilde{m}_{v^+_T}(l,a,b)=E_{\mathcal{Q}}[(v^+_T-a)^l 1_{[a,b)}(v^+_T)],\; a,b \in \mathbb{R}
 \end{equation}
 are the constrained moments on $[a,b)$ of $v^+_T$ centered at $a$. Their calculation is discussed in section 4.
 \section{An Ornstein-Uhlenbeck stochastic covariance model}\label{covmod}
   Our  model is based on the general Ornstein-Uhlenbeck process with a  Levy Background  Noise Process (BDLP)  as studied in \cite{barn2}. It has been  extended to a multidimensional setting in \cite{BN-nicolato}. \\
    We define a matrix-valued correlation  process, based on  independent   Levy processes $F_t=(F_{t}^{(1)},F_{t}^{(2)})$ and $V_t=(V_{t}^{(1)},V_{t}^{(2)})$,  with respective characteristic exponents $\psi_F$ and $\psi_V$.\\
The covariance process is defined for any $t \geq 0$  by:
\begin{equation}\label{eq:covfactorpc}
     \Sigma_t= diag( F_t)+A\; diag(V_t)\; A'
\end{equation}
where $A=(a_{ij})$ is a $2 \times 2$ deterministic orthonormal loading  matrix.\\
Note that $F$ and $V$ correspond with  idiosyncratic and common factors respectively. Furthermore, we assume  $F^{(l)}$ and $V^{(l)}$ are   Ornstein-Ulenbeck Levy processes given by:
\begin{equation}\label{eq:ouf}
    dF_t^{(l)}=-\lambda_{F,l}F_t^{(l)}dt+ dZ_{\lambda_{F,l}t}^{(F,l)}
\end{equation}
\begin{equation}\label{eq:ouv}
    dV_t^{(l)}=-\lambda_{V,l}V_t^{(l)}dt+ dZ_{\lambda_{V,l}t}^{(V,l)}
\end{equation}
 with BDLP given respectively by $(Z_{\lambda_{F,l}t}^{(F,l)})$ and $(Z_{\lambda_{V,l}t}^{(V,l)})$, $\lambda_{F,l}>0, \lambda_{V,l}>0, l=1,2$.\\
  After applying Ito formula we have that the integrated processes corresponding to equations (\ref{eq:ouf}) and (\ref{eq:ouv}) are given by:

 \begin{equation}\label{eq:intouf}
    F_t^{(l,+)}=\lambda_{F,l}^{-1}(1-exp(-\lambda_{F,l}t)) F_0^{(l)}+\lambda_{F,l}^{-1} \int_0^t (1-exp(-\lambda_{F,l}(t-s)))dZ^{F,l}_{\lambda_{F,l}s}
\end{equation}
\begin{equation}\label{eq:intouv}
    V_t^{(l,+)}=\lambda_{V,l}^{-1}(1-exp(-\lambda_{V,l}t))V_0^{(l)}+\lambda_{V,l}^{-1} \int_0^t (1-exp(-\lambda_{V,l}(t-s)))dZ^{V,l}_{\lambda_{V,l}s}
\end{equation}
We consider  inverse Gaussian subordinators with respective characteristic exponents :
\begin{equation}\label{eq:charexpf}
\Psi_{Z_t^{F,l}}(\theta)=-a_{F,l} \left(\sqrt{-2i \theta+b^2_{F,l}} -b_{F,l}\right)
\end{equation}
\begin{equation}\label{eq:charexpv}
\Psi_{Z_t^{V,l}}(\theta)=-a_{V,l} \left(\sqrt{-2i \theta+b^2_{V,l}} -b_{V,l}\right)
\end{equation}
The integrated covariance process is given by:
 \begin{equation}\label{eq:intcovfactorpc}
\Sigma _{t}^{+}=\int_{0}^{t} (diag(F_s)+A\; diag(V_s)\; A')\;  ds=diag(F^+_t) +A\; diag(V^+_t) A'
\end{equation}
Its characteristic function is computed in the proposition below:
\begin{theorem}\label{chfbnmodel}
Let  $\Sigma _{t}^{+}$ be the integrated covariance processes  defined by equation (\ref{eq:intcovfactorpc}), with $F=(F_t)_{t \geq 0}$ and $V=(V_t)_{t \geq 0}$ following Ornstein-Ulenbeck processes having initial deterministic values $F_0$ and $V_0$ and independent Inverse Gaussian subordinators as BDLPs.\\
 Denote by  $\varphi_{\Sigma_t^+}$  its  characteristic functions, let $\theta=(\theta_{kj})_{k,j=1,2}$ be a $2 \times 2$ matrix and $\tilde{\theta} \in \mathbb{R}$. Then, for $\theta \neq 0$:
\begin{eqnarray}\label{eq:intcharfuncorre}\notag
    \varphi_{\Sigma_t^+}(\theta)&=&exp \left( K^+_1(\theta)+K^+_2(\theta) \right)
\end{eqnarray}
with:
\begin{eqnarray} \nonumber
    K^+_1(\theta)&=&i \sum_{l=1}^2 \theta_{ll}\lambda_{F,l}^{-1}(1-exp(-\lambda_{F,l}t))F_0^{(l)}+\sum_{l=1}^2  I_F^{(l)}(\lambda_{F,l}t, \theta_{ll})\\ \nonumber
   K^+_2(\theta)&= &i\sum_{l=1}^2 tr(  \theta AC_{l}A') \lambda_{V,l}^{-1}((1-exp(-\lambda_{V,l}t))V_0^{(l)}+ \sum_{l=1}^2   I_V^{(l)}(\lambda_{V,l}t,tr(  \theta AC_{l}A'))\\ \nonumber
  && \\     \nonumber
    I_F^{(l)}(\lambda_{F,l}t,\tilde{\theta})&=&{-\frac {2\,a_{F,l}}{\sqrt {i \lambda_{F,l}}}}\left[ -T^{F,l}_2(\tilde{\theta})+\sqrt{i \lambda_{F,l}}b_{F,l}+ \frac{i}{2}T^{F,l}_1(\tilde{\theta}) G^{F,l}(\tilde{\theta}) \right]+\lambda_{F,l} a_{F,l} b_{F,l} t\\ \label{eq:igintf}
    &&\\ \nonumber
       I_V^{(l)}(\lambda_{V,l}t,\tilde{\theta})&=&{-\frac {2\,a_{V,l}}{\sqrt {i \lambda_{V,l}}}}\left[ -T^{V,l}_2(\tilde{\theta})+\sqrt{i \lambda_{V,l}}b_{V,l}+\frac{i}{2}T^{V,l}_1(\tilde{\theta}) G^{V,l}(\tilde{\theta})\right]+\lambda_{V,l} a_{V,l} b_{V,l} t \\ \label{eq:igintv}
       &&
\end{eqnarray}
 and
\begin{equation*}\label{eq:g}
  G^{F,l}(\tilde{\theta})=\log \left( exp(-\lambda_{F,l}t) \frac{(T^{F,l}_1(\tilde{\theta})+i \sqrt{i\lambda_{F,l}}\,b_{F,l})^2 }{(T^{F,l}_1(\tilde{\theta})+i T^{F,l}_2(t,\tilde{\theta}))^2}\right)
 \end{equation*}
 \begin{eqnarray*}
    T^{F,l}_1(\tilde{\theta})&=& \sqrt {-2\,\tilde{\theta}-i\lambda_{F,l}\,{b^{2}_{F,l}}}\\
T^{F,l}_2(t,\tilde{\theta})&=&\sqrt {2\,\tilde{\theta}(1-{{\rm e}^{-\lambda_{F,l}\,t}})+i\lambda_{F,l}\,{b^{2}_{F,l}}} \\
 \end{eqnarray*}
Analogous expressions for $T^{V,l}_1, T^{V,l}_2$ and $G^{V,l}$ are defined after replacing $F$ by $V$.
\end{theorem}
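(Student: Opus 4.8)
The plan is to exploit the affine structure of the integrated covariance process in (\ref{eq:intcovfactorpc}) together with the independence of the four driving Ornstein-Uhlenbeck processes, reducing the computation to that of the characteristic function of a single scalar integrated Ornstein-Uhlenbeck process having an Inverse Gaussian subordinator as BDLP. First I would recall that for a symmetric matrix-valued random variable $X$ the characteristic function at a matrix argument $\theta$ is $\varphi_{X}(\theta)=E_{\mathcal{Q}}[\exp(i\,tr(\theta X))]$. Writing $diag(F_t^+)=\sum_{l=1}^2 F_t^{(l,+)}C_l$ and $diag(V_t^+)=\sum_{l=1}^2 V_t^{(l,+)}C_l$ and using the cyclic invariance of the trace, (\ref{eq:intcovfactorpc}) gives
\begin{equation*}
 tr(\theta\Sigma_t^+)=\sum_{l=1}^2 \theta_{ll}\,F_t^{(l,+)}+\sum_{l=1}^2 tr(\theta A C_l A')\,V_t^{(l,+)} .
\end{equation*}
Since $F^{(1)},F^{(2)},V^{(1)},V^{(2)}$ are independent, the expectation of $\exp(i\,tr(\theta\Sigma_t^+))$ factors into a product of four one-dimensional characteristic functions; taking logarithms yields the additive decomposition $\varphi_{\Sigma_t^+}(\theta)=\exp(K_1^+(\theta)+K_2^+(\theta))$, where $K_1^+$ gathers the two $F$-contributions (evaluated at the scalar arguments $\theta_{ll}$) and $K_2^+$ the two $V$-contributions (evaluated at $tr(\theta A C_l A')$).

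Next, for a fixed $l$ I would compute the characteristic function of $F_t^{(l,+)}$ at a real argument $\tilde\theta$ from the explicit representation (\ref{eq:intouf}). The first summand there is deterministic and contributes $i\tilde\theta\lambda_{F,l}^{-1}(1-e^{-\lambda_{F,l}t})F_0^{(l)}$ to the log-characteristic function. For the stochastic-integral summand I would apply the exponential formula for integrals of deterministic integrands against a L\'evy process, $E_{\mathcal{Q}}[\exp(i\int_0^S h(s)\,dZ_s)]=\exp(\int_0^S \Psi_Z(h(s))\,ds)$, after the deterministic time change $u=\lambda_{F,l}s$ implicit in $Z^{F,l}_{\lambda_{F,l}s}$; this produces
\begin{equation*}
 I_F^{(l)}(\lambda_{F,l}t,\tilde\theta)=\lambda_{F,l}\int_0^t \Psi_{Z^{F,l}}\!\big(\tilde\theta\lambda_{F,l}^{-1}(1-e^{-\lambda_{F,l}w})\big)\,dw ,
\end{equation*}
and the corresponding identity for $I_V^{(l)}$ with $F$ replaced by $V$.

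Finally I would insert the Inverse Gaussian characteristic exponent (\ref{eq:charexpf}) into this integral. Separating its linear part gives the summand $\lambda_{F,l}a_{F,l}b_{F,l}t$, and what remains is $-\lambda_{F,l}a_{F,l}\int_0^t\sqrt{b_{F,l}^2-2i\tilde\theta\lambda_{F,l}^{-1}(1-e^{-\lambda_{F,l}w})}\;dw$. The substitution $y=e^{-\lambda_{F,l}w}$ turns this into a constant multiple of $\int_{e^{-\lambda_{F,l}t}}^{1}y^{-1}\sqrt{\alpha+\beta y}\;dy$ with $\alpha=b_{F,l}^2-2i\tilde\theta\lambda_{F,l}^{-1}$ and $\beta=2i\tilde\theta\lambda_{F,l}^{-1}$, an elementary integral: using $\int y^{-1}\sqrt{\alpha+\beta y}\,dy=2\sqrt{\alpha+\beta y}+\sqrt{\alpha}\,\log\!\big((\sqrt{\alpha+\beta y}-\sqrt{\alpha})^2/(\beta y)\big)$ and evaluating between the endpoints (note $\alpha+\beta=b_{F,l}^2$, so the square root equals $b_{F,l}$ at $y=1$), one identifies the endpoint values of $\sqrt{\alpha+\beta y}$ with $b_{F,l}$ and, up to the normalizing factor $\sqrt{i\lambda_{F,l}}$, with $T_1^{F,l}$ and $T_2^{F,l}$, while the difference of logarithms collapses to $G^{F,l}$. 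This reproduces (\ref{eq:igintf}), and (\ref{eq:igintv}) follows verbatim with $V$ in place of $F$.

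The main obstacle is this last computation. The integrand is a complex-valued square root, so one must fix a consistent branch of both the square root and the logarithm — the legitimate choice being the one that makes the expression continuous in $\tilde\theta$ and reduce, as $\tilde\theta\to 0$, to the deterministic cumulant — and one must carry the factors of $\lambda_{F,l}$, $i$ and the signs introduced by the time change $Z^{F,l}_{\lambda_{F,l}s}$ carefully through to the stated definitions of $T_1^{F,l}$, $T_2^{F,l}$ and $G^{F,l}$. Everything else is a routine combination of independence and the L\'evy exponential formula.
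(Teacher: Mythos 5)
Your proposal is correct and follows essentially the same route as the paper: factor $tr(\theta\Sigma_t^+)$ into the four independent scalar contributions $\theta_{ll}F_t^{(l,+)}$ and $tr(\theta AC_lA')V_t^{(l,+)}$, apply the exponential formula $E_{\mathcal{Q}}[\exp(i\int_0^t f(s)\,dL_s)]=\exp(\int_0^t\Psi_L(f(s))\,ds)$ to the representations (\ref{eq:intouf})--(\ref{eq:intouv}), and then integrate the Inverse Gaussian exponent explicitly. The only difference is cosmetic: you evaluate the elementary integral directly in logarithmic form via the substitution $y=e^{-\lambda w}$, whereas the paper first obtains an $\arctan$-form antiderivative and then converts it to the logarithm $G^{F,l}$ using $\arctan z=-\tfrac{i}{2}\log\bigl((1+iz)^2/(1+z^2)\bigr)$, checking that the arguments avoid $\pm i$ precisely when $\tilde{\theta}\neq 0$ --- which is the branch/continuity issue you correctly flag as the delicate step.
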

\begin{proof}
See appendix.
 \end{proof}
Moments of the integrated process can be obtained from the derivatives of the integrated characteristic function evaluated at zero. To this end we need to compute the derivatives of  expressions (\ref{eq:igintf}) and (\ref{eq:igintv}).\\ For simplicity we provisionally drop the dependence on $V$ and $F$. Notice that $I(\lambda t,\tilde{\theta}\lambda^{-1}(1-exp(-\lambda t+s)))$ is differentiable with respect to $\tilde{\theta}$ in a vicinity of zero. Moreover, at  points $\tilde{\theta}$ different from zero:
\begin{eqnarray*}
  \frac{\partial \Psi_Z(\tilde{\theta}\lambda^{-1}(1-exp(-\lambda t+s)))}{\partial \tilde{\theta}} &=& -\frac{a(1-exp(-\lambda t+s))}{\sqrt{i \lambda}\sqrt{2 \tilde{\theta}(1-exp(-\lambda t+s))+i \lambda b^2 }}
\end{eqnarray*}
For the case $\tilde{\theta}=0$ we take into account that  $\Psi_Z(0)=0$ to have:
\begin{eqnarray*}
    \frac{\partial \Psi_Z(\tilde{\theta}\lambda^{-1}(1-exp(-\lambda t+s)))}{\partial \tilde{\theta}}|_{\tilde{\theta}=0} &=&\frac{a}{\sqrt{\lambda}} \lim_{\tilde{\theta} \rightarrow 0}\frac{i (1-exp(-\lambda t+s)) }{\sqrt{-2 i \tilde{\theta}(1-exp(-\lambda t+s))+\lambda b^2}}\\
   &=&\frac{i a(1-exp(-\lambda t+s))}{\lambda b}
\end{eqnarray*}
The fact that the function $\Psi_Z$ is continuously differentiable on a vicinity of zero and continuous on the variable  $s$ on the interval $[0,\lambda t]$ allows to interchange derivative and integration by Lebesgue Dominated Convergence Theorem. Therefore, for $\tilde{\theta} \neq 0$:
\begin{eqnarray}\label{eq:derint1}
 \frac{\partial I(\lambda t, \tilde{\theta})}{\partial \tilde{\theta}} &=& -\frac{a}{\sqrt{i \lambda}} \int_0^{\lambda t} \frac{1-exp(-\lambda t+s)}{\sqrt{2 \tilde{\theta}(1-exp(-\lambda t+s))+i \lambda b^2 }}\;ds
\end{eqnarray}
At $\tilde{\theta}=0$:
\begin{eqnarray*}
 \frac{\partial I(\lambda t, \tilde{\theta})}{\partial \tilde{\theta}}|_{\tilde{\theta}=0} &=& i \frac{a}{ \lambda b}(\lambda t-(1-exp(-\lambda t)))
\end{eqnarray*}
Moreover, for $n \geq 2$ the $n$-th derivative is obtained as:
\begin{eqnarray*}
  \frac{\partial^n I(\lambda t,\tilde{\theta})}{\partial \tilde{\theta}^n}&=& (-1)^n \prod_{k=2}^n (2k-3)\frac{a}{\sqrt{i \lambda} } \int_0^{\lambda t} \frac{(1-exp(-\lambda t+s))^n}{(2 \tilde{\theta}(1-exp(-\lambda t+s))+i \lambda b^2)^{\frac{2n-1}{2}}}\;ds
  \end{eqnarray*}
  and evaluating at $\tilde{\theta}=0$:
  \begin{eqnarray*}
  \frac{\partial^n I(\lambda t,\tilde{\theta})}{\partial \tilde{\theta}^n}|_{\tilde{\theta}=0} &=& (-1)^n \prod_{k=2}^n (2k-3)\frac{a}{(i\lambda)^n b^{2n-1}}\int_0^{\lambda t} (1-exp(-\lambda t+s))^n\;ds\\
    &=&  (-1)^{n}\prod_{k=2}^n (2k-3)\frac{a}{(i\lambda)^n b^{2n-1}} \left[\sum_{k=1}^n \left( \begin{array}{c}
                                                                                                                    n \\
                                                                                                                    k
                                                                                                                  \end{array}
   \right)(-1)^{k}\frac{(1-exp(-k\lambda t))}{k}+\lambda t\right]
   \end{eqnarray*}
\begin{proposition}
Let  $(\Sigma _{t}^{+})_{t \geq 0}$ be the integrated covariance processes  given by equation (\ref{eq:intcovfactorpc}), where $F=(F_t)_{t \geq 0}$ and $V=(V_t)_{t \geq 0}$ follow Ornstein-Ulenbeck processes with initial deterministic values $F_0$ and $V_0$ and independent Inverse Gaussian subordinators as BDLPs. Then, the first two  moments of the elements in $(\Sigma^+_t)_{t \geq 0}$  are  given by:
 \begin{eqnarray} \nonumber
     E_{\mathcal{Q}}(\sigma^{kk+}_t) &=& \lambda_{F,k}^{-1}(1-exp(-\lambda_{F,k}t))F_0^{(k)}-i \frac{\partial  I_F^{(k)}(\lambda_{F,k} t, \theta_{kk})}{\partial \theta_{kk}}|_{\theta_{kk}=0}\\ \nonumber
    &+&  \sum_{l=1}^2 a^2_{kl} \lambda_{V,l}^{-1}(1-exp(-\lambda_{V,l}t))V_0^{(l)}-i  \sum_{l=1}^2 a^2_{kl} \frac{\partial I_V^{(l)}(\lambda_{V,l}t,tr(\theta AC_{l}A'))}{\partial \theta_{kk}}|_{\theta=0}\\
      \label{eq:moment1gralkkf}
   \end{eqnarray}
   \begin{eqnarray} \nonumber
     E_{\mathcal{Q}}(\sigma^{12+}_t) &=&   \sum_{l=1}^2 a_{1l}a_{2l} \lambda_{V,l}^{-1}(1-exp(-\lambda_{V,l}t))V_0^{(l)}-i  \sum_{l=1}^2  a_{1l}a_{2l} \frac{\partial I_V^{(l)}(\lambda_{V,l}t,tr(\theta AC_{l}A'))}{\partial \theta_{12}}|_{\theta=0}\\
      \label{eq:moment1gral12f}
   \end{eqnarray}
   where for $k,l,j=1,2$:
   \begin{eqnarray*}\label{eq:der1int}
\frac{\partial I_F^{(k)}( \lambda_{F,k} t,\tilde{\theta})}{\partial \tilde{\theta}}(\lambda_{F,k} t,0)&:=& \frac{\partial I_F^{(k)}( \lambda_{F,k} t,\tilde{\theta})}{\partial \tilde{\theta}}|_{\tilde{\theta}=0}\\
 &=& \frac{i a_{F,k}}{ \lambda_{F,k} b_{F,k}}(\lambda_{F,k} t-(1-exp(-\lambda_{F,k} t))\\ \label{eq:der1int}
\frac{\partial I_V^{(l)}( \lambda_{V,l} t,tr(\theta AC_{l}A'))}{\partial \theta_{kj}}(\lambda_{V,l} t, 0) &:=&\frac{\partial I_V^{(l)}( \lambda_{V,l} t,tr(\theta AC_{l}A'))}{\partial \theta_{kj}}|_{\theta=0}\\
&=& \frac{i a_{V,l}}{ \lambda_{V,l} b_{V,l}}(\lambda_{V,l} t+exp(-\lambda_{V,l} t)-1)
\end{eqnarray*}
Moreover, for $k,l=1,2$:
  \begin{eqnarray} \nonumber
     E_{\mathcal{Q}}(\sigma^{kk+}_t)^2 &=& -\left(i \lambda_{F,k}^{-1}(1-exp(-\lambda_{F,k}t))F_0^{(k)} + \frac{\partial I^{(k)}_F( \lambda_{F,k} t,0)}{\partial \theta_{kk}} \right.\\ \nonumber
          &+& i \sum_{l=1}^2 a^2_{kl} \lambda_{V,l}^{-1}(1-exp(-\lambda_{V,l}t))V_0^{(l)}+\left. \sum_{l=1}^2 a^2_{kl} \frac{\partial I_V^{(l)}(\lambda_{V,l}t,0)}{\partial \theta_{kk}}   \right)^2\\ \nonumber
   &-&  \left(\frac{\partial^2 I^{(l)}_F( \lambda_{F,k} t,\theta_{kk})}{\partial \theta^2_{kk}}|_{\theta_{kk}=0} +\sum_{l=1}^2 a^4_{kl} \frac{\partial^2 I_V^{(l)}(\lambda_{V,l}t,0)}{\partial \theta^2_{kk}} \right)\\
      && \label{eq:moment2gralkkf2}
   \end{eqnarray}
\begin{eqnarray}\notag
E_{\mathcal{Q}}[(\sigma^{12+}_t)^2]  &=& - \sum_{l=1}^2 a^2_{1l}a^2_{2l}\frac{\partial^2 I^{(l)}_V(\lambda_{V,l} t,0)}{\partial \theta^2_{12}} \\ \notag
 &-&  \left(i \sum_{l=1}^2 a_{1l}a_{2l} \lambda_{V,l}^{-1}(1-exp(-\lambda_{V,l}t))V_0^{(l)}+\sum_{l=1}^2 a_{1l}a_{2l} \frac{\partial I^{(l)}_V(\lambda_{V,l} t,0)}{\partial \theta_{12}}| \right)^2 \\   \label{eq:moment212uneq}
    &&
    \end{eqnarray}
   \begin{eqnarray}\nonumber
E_{\mathcal{Q}}(\sigma^{kk+}_t \sigma^{12+}_t)&=&   -\sum_{l=1}^2 a_{kl}^2a_{1l}a_{2l} \frac{\partial^2 I^{(l)}_V(\lambda_{V,l} t,0)}{\partial \theta^2_{kk}} \\ \notag
&-&\left( i   \lambda_{F,k}^{-1}(1-exp(-\lambda_{F,k}t))F_0^{(k)}+\frac{\partial I_F^{(k)}}{\partial \theta_{kk}}( \lambda_{F,k} t,0)| \right.\\ \nonumber
     &+& \left. i \sum_{l=1}^2 a_{kl}^2 \lambda_{V,l}^{-1}(1-exp(-\lambda_{V,l}t))V_0^{(l)}+\sum_{l=1}^2 a_{kl}^2 \frac{\partial I^{(l)}_V(\lambda_{V,l} t,tr(\theta AC_{l}A'))}{\partial \theta_{kk}}|_{\theta=0}\right)\\ \nonumber
&&\left( i \sum_{l=1}^2 a_{1l}a_{2l} \lambda_{V,l}^{-1}(1-exp(-\lambda_{V,l}t))V_0^{(l)}+  \sum_{l=1}^2 a_{1l} a_{2l}  \frac{\partial I^{(l)}_V(\lambda_{V,l} t,0)}{\partial \theta_{12}}|\right) \label{eq:moment2kk12uneq2}
 \end{eqnarray}
for $k=1,2$.
\begin{eqnarray}\notag
E_{\mathcal{Q}}(\sigma^{11+}_t \sigma^{22+}_t) &=& -\sum_{l=1}^2 a^2_{1l}a^2_{2l}\frac{\partial^2 I_V^{(l)}(\lambda_{V,l}t,0}{\partial \theta_{11}\partial \theta_{22}} \\ \nonumber
&-& \left( i \lambda_{F,1}^{-1}(1-exp(-\lambda_{F,1}t))F_0^{(1)}+\frac{\partial I_F^{(1)}( \lambda_{F,1} t,0)}{\partial \theta_{11}}| \right.\\ \nonumber
&+& i \left. \sum_{l=1}^2 a_{1l}^2 \lambda_{V,l}^{-1}(1-exp(-\lambda_{V,l}t))V_0^{(l)}+\sum_{l=1}^2 a_{1l}^2 \frac{\partial I^{(l)}_V(\lambda t,0)}{\partial \theta_{11}}\right)\\ \nonumber
&&  \left( i \lambda_{F,2}^{-1}(1-exp(-\lambda_{F,2}t))F_0^{(2)}+\frac{\partial I_F^{(2)}( \lambda_{F,2} t,0)}{\partial \theta_{22}}\right.\\ \nonumber
&+& i \left. \sum_{l=1}^2 a_{2l}^2 \lambda_{V,l}^{-1}(1-exp(-\lambda_{V,l}t))V_0^{(l)}+\sum_{l=1}^2 a_{2l}^2 \frac{\partial I^{(l)}_V(\lambda_{V,l} t,0)}{\partial \theta_{22}}\right)\\ \label{eq:moment1122uneq3}
   \end{eqnarray}
where for $k,l,j=1,2$:
\begin{eqnarray*}\notag
\frac{\partial^2 I_F^{(k)}(\lambda_{F,k} t,\theta_{kk})}{\partial \theta_{kk}^2}(\lambda_{F,k} t, 0)&:=& \frac{\partial^2 I_F^{(k)}(\lambda_{F,k} t,\theta_{kk})}{\partial \theta_{kk}^2}|_{\theta_{kk}=0}\\
&=&\frac{a_{F,k}}{\lambda_{F,k}^2 b_{F,k}^3}\left[ \lambda_{F,k} t-2(1-exp(-\lambda_{F,k} t))+\frac{1}{2}(1-exp(-2\lambda_{F,k} t))\right]
   \end{eqnarray*}
\begin{eqnarray*}\notag
\frac{\partial^2 I^{(l)}_V(\lambda t,tr(\theta AC_{l}A'))}{\partial \theta^2_{kj}}|_{\theta=0}&=&\frac{a_{V,l}}{\lambda_{V,l}^2 b_{V,l}^3}\left[ \lambda_{V,l} t-2(1-exp(-\lambda_{V,l} t))+\frac{1}{2}(1-exp(-2\lambda_{V,l} t))\right]
   \end{eqnarray*}
\end{proposition}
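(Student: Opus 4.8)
The plan is to extract the first and second moments directly from the integrated characteristic function $\varphi_{\Sigma_t^+}(\theta)=\exp(K^+(\theta))$, $K^+:=K^+_1+K^+_2$, of Theorem \ref{chfbnmodel} by differentiating at $\theta=0$. Adopting the convention $\varphi_{\Sigma_t^+}(\theta)=E_{\mathcal{Q}}[\exp(i\,tr(\theta\Sigma_t^+))]$ consistent with (\ref{eq:chfconstnt}), and treating the four entries $\theta_{11},\theta_{12},\theta_{21},\theta_{22}$ as free scalar variables, symmetry of $\Sigma_t^+$ gives $\partial_{\theta_{kj}}tr(\theta\Sigma_t^+)=\sigma^{kj+}_t$, hence $E_{\mathcal{Q}}(\sigma^{kj+}_t)=-i\,\partial_{\theta_{kj}}\varphi_{\Sigma_t^+}|_{\theta=0}$ and $E_{\mathcal{Q}}(\sigma^{kj+}_t\sigma^{lm+}_t)=-\,\partial^2_{\theta_{kj}\theta_{lm}}\varphi_{\Sigma_t^+}|_{\theta=0}$. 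Since $K^+$ vanishes at $\theta=0$, differentiating the exponential yields
\[
\partial_{\theta_{kj}}\varphi_{\Sigma_t^+}|_{\theta=0}=\partial_{\theta_{kj}}K^+|_{\theta=0},\qquad
\partial^2_{\theta_{kj}\theta_{lm}}\varphi_{\Sigma_t^+}|_{\theta=0}=\left[\partial^2_{\theta_{kj}\theta_{lm}}K^+ + \partial_{\theta_{kj}}K^+\,\partial_{\theta_{lm}}K^+\right]_{\theta=0},
\]
so the whole problem reduces to the first and second partials of $K^+_1$ and $K^+_2$ at the origin.

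Next I would differentiate $K^+_1$ and $K^+_2$. The term $K^+_1$ involves $\theta$ only through the diagonal entries --- linearly and through $I_F^{(l)}(\lambda_{F,l}t,\theta_{ll})$ --- so $\partial_{\theta_{kj}}K^+_1|_{0}=0$ unless $k=j$, and its mixed second partial is nonzero only when both derivatives are taken with respect to the same $\theta_{ll}$; this is exactly why the idiosyncratic factor $F$ drops out of every off-diagonal moment and of $E_{\mathcal{Q}}(\sigma^{11+}_t\sigma^{22+}_t)$. The term $K^+_2$ involves $\theta$ only through the scalars $\tau_l:=tr(\theta A C_l A')$; since $C_l$ has its single nonzero entry at $(l,l)$ and $A$ is orthonormal, $\partial_{\theta_{kj}}\tau_l=(AC_lA')_{jk}=a_{jl}a_{kl}$, i.e. $a_{kl}^2$ for a diagonal entry and $a_{1l}a_{2l}$ for the $(1,2)$ and $(2,1)$ entries. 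Moreover $\tau_l$ is affine in $\theta$, so $\partial^2\tau_l\equiv0$ and the second-order chain rule collapses to $\partial^2_{\theta_{kj}\theta_{lm}}[I_V^{(n)}(\lambda_{V,n}t,\tau_n)]|_{0}=(\partial_{\theta_{kj}}\tau_n)(\partial_{\theta_{lm}}\tau_n)\,\partial^2_{\tilde{\theta}}I_V^{(n)}(\lambda_{V,n}t,0)$, while $\partial_{\theta_{kj}}[I_V^{(n)}(\lambda_{V,n}t,\tau_n)]|_{0}=(\partial_{\theta_{kj}}\tau_n)\,\partial_{\tilde{\theta}}I_V^{(n)}(\lambda_{V,n}t,0)$. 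Substituting these into the identities above and multiplying by the appropriate power of $-i$ produces the products of loadings ($a^2_{kl}$, $a^4_{kl}$, $a_{1l}a_{2l}$, $a^2_{1l}a^2_{2l}$, $a^2_{kl}a_{1l}a_{2l}$) that appear in (\ref{eq:moment1gralkkf})--(\ref{eq:moment1122uneq3}).

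It then remains to evaluate $\partial_{\tilde{\theta}}I^{(l)}_{F}$, $\partial_{\tilde{\theta}}I^{(l)}_{V}$, $\partial^2_{\tilde{\theta}}I^{(l)}_{F}$ and $\partial^2_{\tilde{\theta}}I^{(l)}_{V}$ at the origin. These come from the general $n$-th derivative formula for $I(\lambda t,\tilde{\theta})$ established just before the statement --- whose validity, including the interchange of derivative and integral, was already justified there by the dominated convergence theorem --- specialized to $n=1$ and $n=2$; the only computation left is the elementary integral $\int_0^{\lambda t}(1-e^{-\lambda t+s})^n\,ds$, which by the binomial theorem equals $\lambda t-(1-e^{-\lambda t})$ for $n=1$ and $\lambda t-2(1-e^{-\lambda t})+\tfrac12(1-e^{-2\lambda t})$ for $n=2$ --- precisely the bracketed factors in the closed forms for $\partial_{\tilde{\theta}}I$ and $\partial^2_{\tilde{\theta}}I$ quoted in the proposition. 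Assembling all the pieces gives the six stated moment formulas.

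The routine parts are those two integrals and the final collection of terms. The one place that requires genuine care --- and the main obstacle --- is the bookkeeping of the middle step: tracking which first partials and which chain-rule cross terms vanish, reducing $\partial_{\theta_{kj}}tr(\theta AC_lA')$ to $a_{jl}a_{kl}$ via orthonormality of $A$, and using affineness of $\tau_l$ to see that second-order information enters only through $(\partial_{\theta}\tau_l)^2\,\partial^2_{\tilde{\theta}}I_V^{(l)}$ off the diagonal and, in addition, through $\partial^2_{\tilde{\theta}}I_F^{(k)}$ on the diagonal. Getting that bookkeeping right is exactly what yields the displayed product-of-loadings structure of (\ref{eq:moment2gralkkf2})--(\ref{eq:moment1122uneq3}).
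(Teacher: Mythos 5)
Your proposal is correct and follows essentially the same route as the paper's own proof: differentiate $\varphi_{\Sigma_t^+}=\exp(K^+_1+K^+_2)$ at $\theta=0$, note that $K^+_1$ sees only the diagonal entries while $K^+_2$ enters through the affine scalars $tr(\theta AC_lA')$ with gradient $a_{kl}a_{jl}$, and plug in the derivatives of $I_F^{(l)}$ and $I_V^{(l)}$ at the origin computed just before the statement. The only difference is presentational — you make the chain-rule bookkeeping and the vanishing of $K^+(0)$ explicit, which the paper leaves implicit.
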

\begin{proof}
See appendix.
\end{proof}
The constrained moments  of $v^+_T$ are needed in the cubic spline approaches. They are obtained via the constrained characteristic function in the proposition above.
\begin{proposition}\label{prop:constchf3}
Let  $\Sigma _{t}^{+}$ be the integrated covariance processes  defined by equation (\ref{eq:intcovfactorpc}), with $F=(F_t)_{t \geq 0}$ and $V=(V_t)_{t \geq 0}$ following Ornstein-Ulenbeck processes having initial deterministic values $F_0$ and $V_0$ and independent Inverse Gaussian subordinators as BDLPs.\\
 Denote by  $\varphi_{v^+_T}(u,a,b)$  the constrained  characteristic function of $v^+_T= tr(M \Sigma^+_T)$. Then:
\begin{eqnarray}\label{eq:constchf2}
    \varphi_{v^+_T}(u,a,b)&=& - \frac{i}{2 \pi} \int_{\mathbb{R}} f(y,a,b) g(u-y)\;dy
\end{eqnarray}
where $g(x)=exp(K^+_1(Mx)+K^+_2(Mx))$ and
\begin{equation*}
  f(y,a,b)=\left \{ \begin{array}{cc}
             \frac{exp(iby)-exp(iay)}{y}    & y \neq 0 \\
                i(b-a) & y=0
              \end{array}
   \right.
\end{equation*}
Moreover, derivatives of the constrained characteristic function with respect to $u$ evaluated at zero can be computed as:
\begin{eqnarray}\nonumber
 && D^n \varphi_{v^+_T}(u,a,b)|_{u=0}= - \frac{i}{2 \pi} \sum_{j=0}^{n-1} \left(\begin{array}{c}
                         n-1\\
                          j
                        \end{array} \right)\\ \nonumber
 &&\int_{\mathbb{R}} f(y,a,b)[D^{j+1}K^+_1(-My)+ D^{j+1} K^+_2(-My)]D^{n-j-1}g(-y)\;dy \\ \label{eq:dercnschf}
                        &&
\end{eqnarray}
for $j=2,3,\ldots, n-1$.\\
Here $D^n$ is the $n$-th derivative with respect to the variable $u$ and $D^j K^+_l(M(u-y))$ is the $j$-th derivative of $K^+_l(M(u-y))$, also with respect to $u$ and evaluated at $u=0$.
\end{proposition}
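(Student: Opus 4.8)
The plan is to represent the indicator $1_{[a,b]}$ through Fourier inversion, insert it into the truncated expectation $\varphi_{v^+_T}(u,a,b)=E_{\mathcal Q}\bigl[e^{iuv^+_T}1_{[a,b]}(v^+_T)\bigr]$, interchange integration and expectation, and recognize the remaining expectation as the unconstrained characteristic function $g$. A direct computation gives $\int_a^b e^{ixy}\,dx=\frac{e^{iby}-e^{iay}}{iy}=-i\,f(y,a,b)$ for $y\neq 0$, and $=b-a=-i\,f(0,a,b)$ for $y=0$; hence the Fourier transform of $1_{[a,b]}$ equals $-i\,f(\cdot,a,b)$ and, informally, $1_{[a,b]}(x)=-\frac{i}{2\pi}\int_{\mathbb R}f(y,a,b)e^{-ixy}\,dy$. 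Substituting and exchanging the order of integration yields
\begin{equation*}
\varphi_{v^+_T}(u,a,b)=-\frac{i}{2\pi}\int_{\mathbb R}f(y,a,b)\,E_{\mathcal Q}\!\left[e^{i(u-y)v^+_T}\right]dy=-\frac{i}{2\pi}\int_{\mathbb R}f(y,a,b)\,\varphi_{v^+_T}(u-y)\,dy ,
\end{equation*}
and, by (\ref{eq:chfconstnt}) and Theorem~\ref{chfbnmodel}, $\varphi_{v^+_T}(u-y)=\varphi_{\Sigma^+_T}(M(u-y))=\exp\bigl(K^+_1(M(u-y))+K^+_2(M(u-y))\bigr)=g(u-y)$, which is (\ref{eq:constchf2}).

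The delicate point, and what I expect to be the main obstacle, is that $f(\cdot,a,b)$ decays only like $1/|y|$, so it is not integrable and the plain Fubini step above is not licit; it must be justified. I would argue as follows. Writing $v^+_T=\sigma_T^{11+}+\sigma_T^{22+}-2\sigma_T^{12+}$ and using (\ref{eq:intcovfactorpc}), $v^+_T$ is a linear combination of the integrated inverse-Gaussian--OU factors $F_T^{(l,+)},V_T^{(l,+)}$ in which at least one idiosyncratic factor enters with a nonzero weight; combining the inverse-Gaussian characteristic exponents (\ref{eq:charexpf})--(\ref{eq:charexpv}) with the expression of Theorem~\ref{chfbnmodel} one derives a decay bound $|g(y)|\le C e^{-c\sqrt{|y|}}$ for suitable constants $c,C>0$, whence $g\in L^1(\mathbb R)$ and $v^+_T$ has a bounded continuous density $p$ with $p(x)=\frac1{2\pi}\int_{\mathbb R}g(\xi)e^{-i\xi x}\,d\xi$. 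Then $\varphi_{v^+_T}(u,a,b)=\int_a^b e^{iux}p(x)\,dx$; inserting the inversion formula for $p$, the integrand is absolutely integrable on $[a,b]\times\mathbb R$, Fubini applies, the inner integral over $x$ produces $-i\,f(u-\xi,a,b)$, and the substitution $y=u-\xi$ gives precisely (\ref{eq:constchf2}). (Equivalently one inserts a regularizing factor $e^{-\varepsilon y^2}$, applies Fubini legitimately, and lets $\varepsilon\downarrow 0$ by dominated convergence.) Since $v^+_T$ is absolutely continuous, any $a$ and $b$ are continuity points of its law, so the formula is unchanged if $[a,b]$ is replaced by $[a,b)$.

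For the derivative identity I would differentiate (\ref{eq:constchf2}) $n$ times in $u$ under the integral sign; this is permissible because $\partial_u^k\bigl(f(y,a,b)g(u-y)\bigr)=f(y,a,b)\,g^{(k)}(u-y)$ is dominated, locally uniformly in $u$, by an integrable function of $y$ (the bounded factor $f$ times the rapidly decaying derivatives of $g$, which inherit the bound from $|g|\le C e^{-c\sqrt{|y|}}$). Only $g(u-y)$ depends on $u$, and $g(u-y)=e^{h(u)}$ with $h(u):=K^+_1(M(u-y))+K^+_2(M(u-y))$. The Leibniz rule applied to $D^n e^{h}=D^{n-1}\bigl(h'e^{h}\bigr)$ gives
\begin{equation*}
D^n e^{h}=\sum_{j=0}^{n-1}\binom{n-1}{j}\bigl(D^{j+1}h\bigr)\bigl(D^{n-1-j}e^{h}\bigr),
\end{equation*}
and evaluating at $u=0$ identifies $D^{j+1}h|_{u=0}=D^{j+1}K^+_1(-My)+D^{j+1}K^+_2(-My)$ and $D^{n-1-j}e^{h}|_{u=0}=D^{n-1-j}g(-y)$, in the notation fixed in the statement. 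Pulling the finite sum outside the integral reproduces (\ref{eq:dercnschf}), which completes the argument.
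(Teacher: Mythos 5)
Your proposal follows essentially the same route as the paper's proof: Fourier-invert the indicator $1_{[a,b]}$, exchange the order of integration to write the constrained characteristic function as the convolution $-\frac{i}{2\pi}\int f(y,a,b)\,\varphi_{\Sigma^+_T}(M(u-y))\,dy$, and obtain the derivative identity from the Leibniz expansion of $D^n e^{h}=D^{n-1}(h'e^{h})$. The only difference is that you explicitly justify the Fubini/differentiation-under-the-integral steps via the $e^{-c\sqrt{|y|}}$ decay of $g$ inherited from the inverse Gaussian exponent, a point the paper passes over silently ("elementary differentiation"); this is added rigor rather than a different argument.
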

\begin{proof}
See appendix.
\end{proof}
 \section{Implementing polynomial  expansions}
In this section we precise the pricing formulas under the two approximations considered .\\
 First, we implement the Taylor method based on equation (\ref{eq:taylorapproxuni}). To this end we first compute the Margrabe price $C_{MT}(v)$ under a model with time-dependent and deterministic volatilities and correlation, together with   its derivatives evaluated at $v=v^*$.\\
 In order to simplify notations we write:
\begin{eqnarray*}
  M_1 &=& c \;exp(-(r-q_1)T)S_0^{(1)},   M_2 = m\; exp(-(r-q_2)T)S_0^{(2)}\\
  M_3 &=& \log \left(\frac{c S^{(1)}_0}{m S^{(2)}_0} \right)+(q_1-q_2)T \\
\end{eqnarray*}
Then, by elementary calculations it follows that:
\begin{eqnarray*}
 D^{k}d_1(v^*)&=&M_3 T^{-\frac{1}{2}}(-1)^k \prod_{j=0}^{k-1} (\frac{1}{2}+j)(v^*)^{-\frac{1}{2}-k}+\frac{1}{2} \sqrt{T} \prod_{j=0}^{k-1} (\frac{1}{2}-j)(v^*)^{\frac{1}{2}-k}
   \end{eqnarray*}
Hence, differentiating the Margrabe formula:
\begin{eqnarray}\nonumber
D^{k}C_{MT}(v^*)&=& M_1 \sum_{j=0}^{k-1} \left( \begin{array}{c}
                                                     k-1 \\
                                                     j
                                                   \end{array}
  \right)D^{j}f_Z(d_1(v^*))D^{k-j}d_1(v^*)\\ \nonumber
  &-& M_2 \sum_{j=0}^{k-1} \left( \begin{array}{c}
                                                     k-1 \\
                                                     j
                                                   \end{array}
  \right)D^{j}[ f_Z(d_1(v^*)-\sqrt{v^*}\sqrt{T})]D^{k-j}d_1(v^*)\\ \nonumber
  &+& M_2 \sqrt{T} \sum_{j=0}^{k-1} \left( \begin{array}{c}
                                                     k-1 \\
                                                     j
                                                   \end{array}
  \right)D^{j}[ f_Z(d_1(v^*)-\sqrt{v^*}\sqrt{T})]\prod_{l=0}^{k-j} (\frac{1}{2}-l)(v^*)^{\frac{1}{2}-k+j}\\ \label{eq:dercmt}
  &&
\end{eqnarray}
Therefore, the price based on the first order Taylor expansion can be computed as:
\begin{eqnarray*}
  \hat{C}^{(1)}_{MS}(v^*) &=& A_0^{(1)}C_M(v^*)+D^{1}C_M(v^*)E_{\mathcal{Q}}(\sigma^{11+}_T)+D^{1}C_M(v^*)E_{\mathcal{Q}}(\sigma^{22+}_T)\\
  &-&2  D^{1}C_M(v^*)E_{\mathcal{Q}}(\sigma^{12+}_T)
\end{eqnarray*}
where $A_0^{(1)}=1-v^*D^{1}$.\\
For the second order expansion we  compute:
\begin{eqnarray}\nonumber
  E_{\mathcal{Q}}(v^+_T-v^*)^2 &=& E_{\mathcal{Q}}(\sigma^{11+}_T)^2+2 E_{\mathcal{Q}}(\sigma^{11+}_T \sigma^{22+}_T)-4  E_{\mathcal{Q}}(\sigma^{11+}_T \sigma^{12+}_T)\\ \nonumber
  &+&E_{\mathcal{Q}}[(\sigma^{22+}_T)^2]-4  E_{\mathcal{Q}}(\sigma^{22+}_T \sigma^{12+}_T)+4  E_{\mathcal{Q}}[(\sigma^{12+}_T)^2]\\ \nonumber
  &-&2v^*E_{\mathcal{Q}}(\sigma^{11+}_T)-2v^*E_{\mathcal{Q}}(\sigma^{22+}_T)+4  v^* E_{\mathcal{Q}}(\sigma^{12+}_T)+(v^*)^2\\ \label{eq:mom2}
  &&
\end{eqnarray}
\begin{figure}[hb!]
\begin{center}
\subfigure[]{
\resizebox*{3.5cm}{!}{\includegraphics{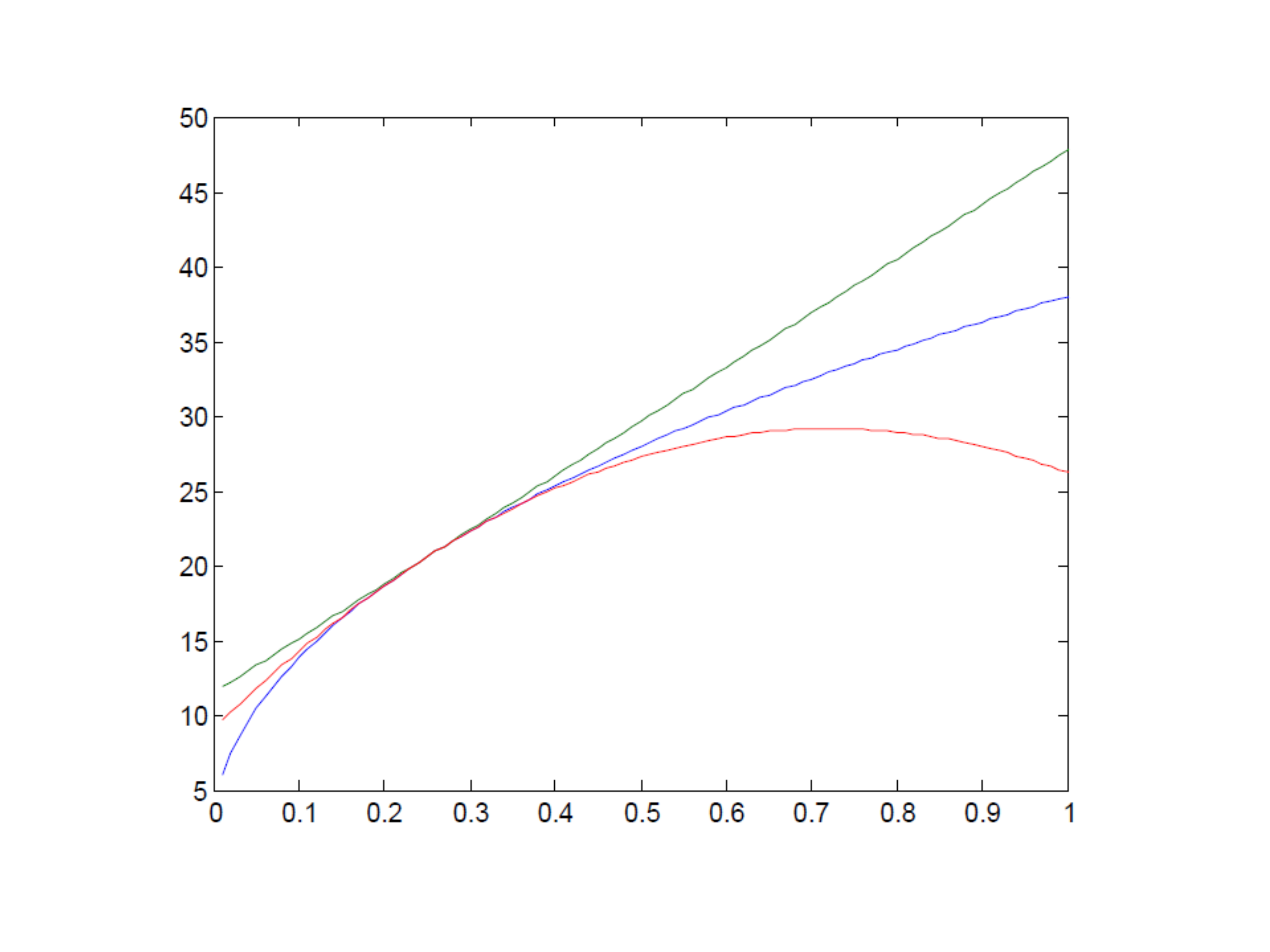}}}\hspace{5pt}
\subfigure[]{
\resizebox*{5cm}{!}{\includegraphics{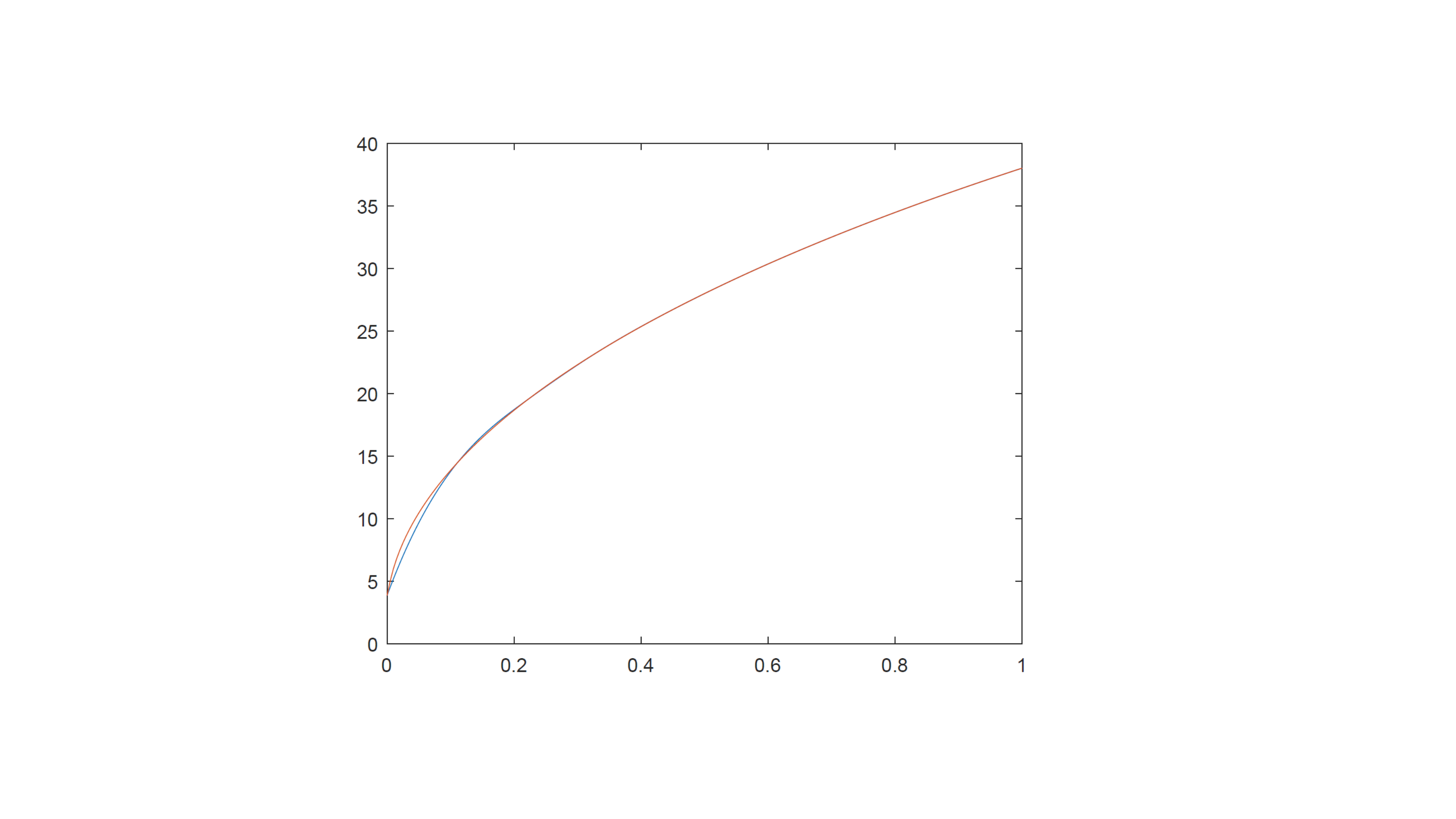}}}
\subfigure[]{
\resizebox*{5cm}{!}{\includegraphics{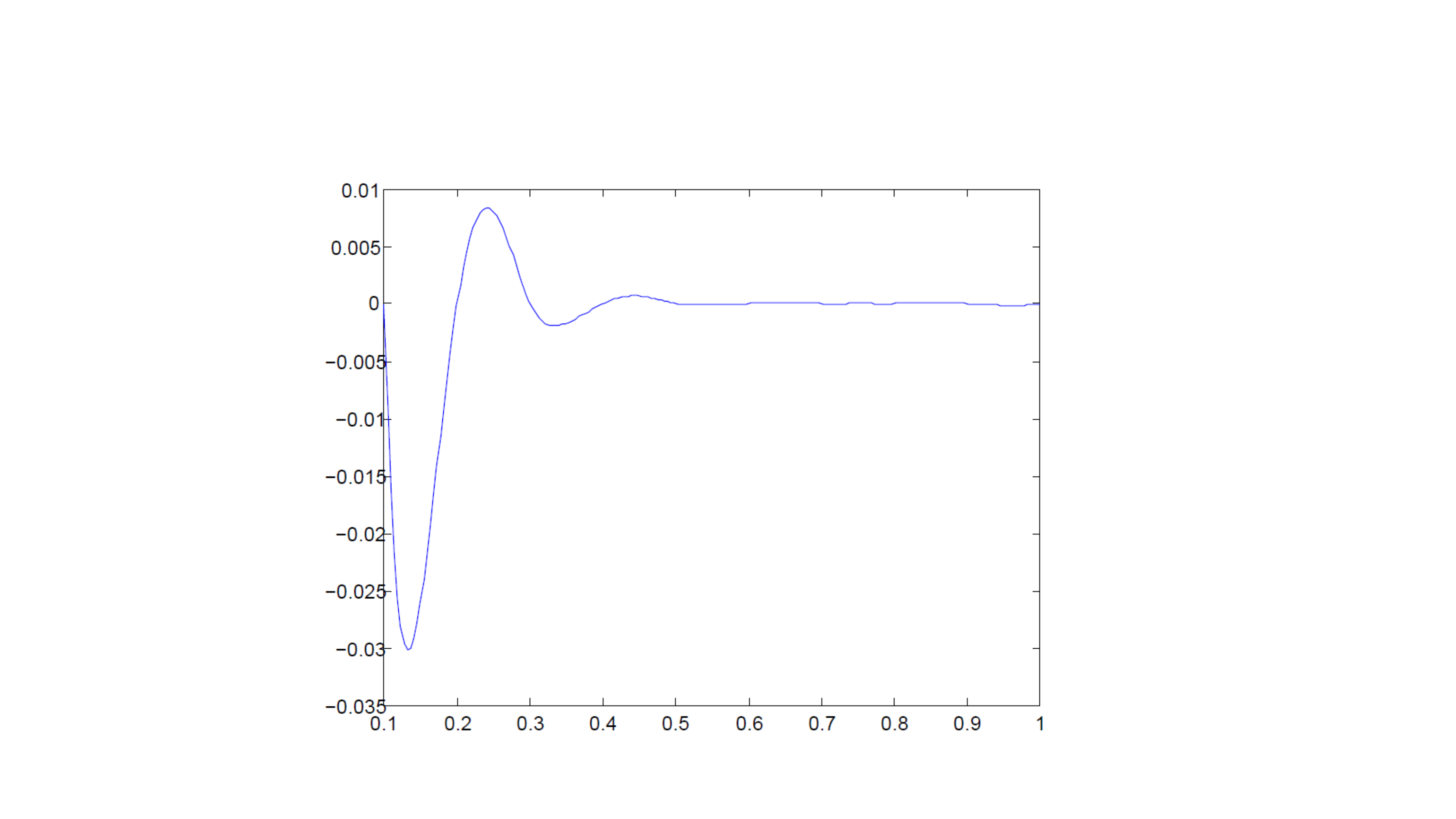}}}\hspace{5pt}
\subfigure[]{
\resizebox*{5cm}{!}{\includegraphics{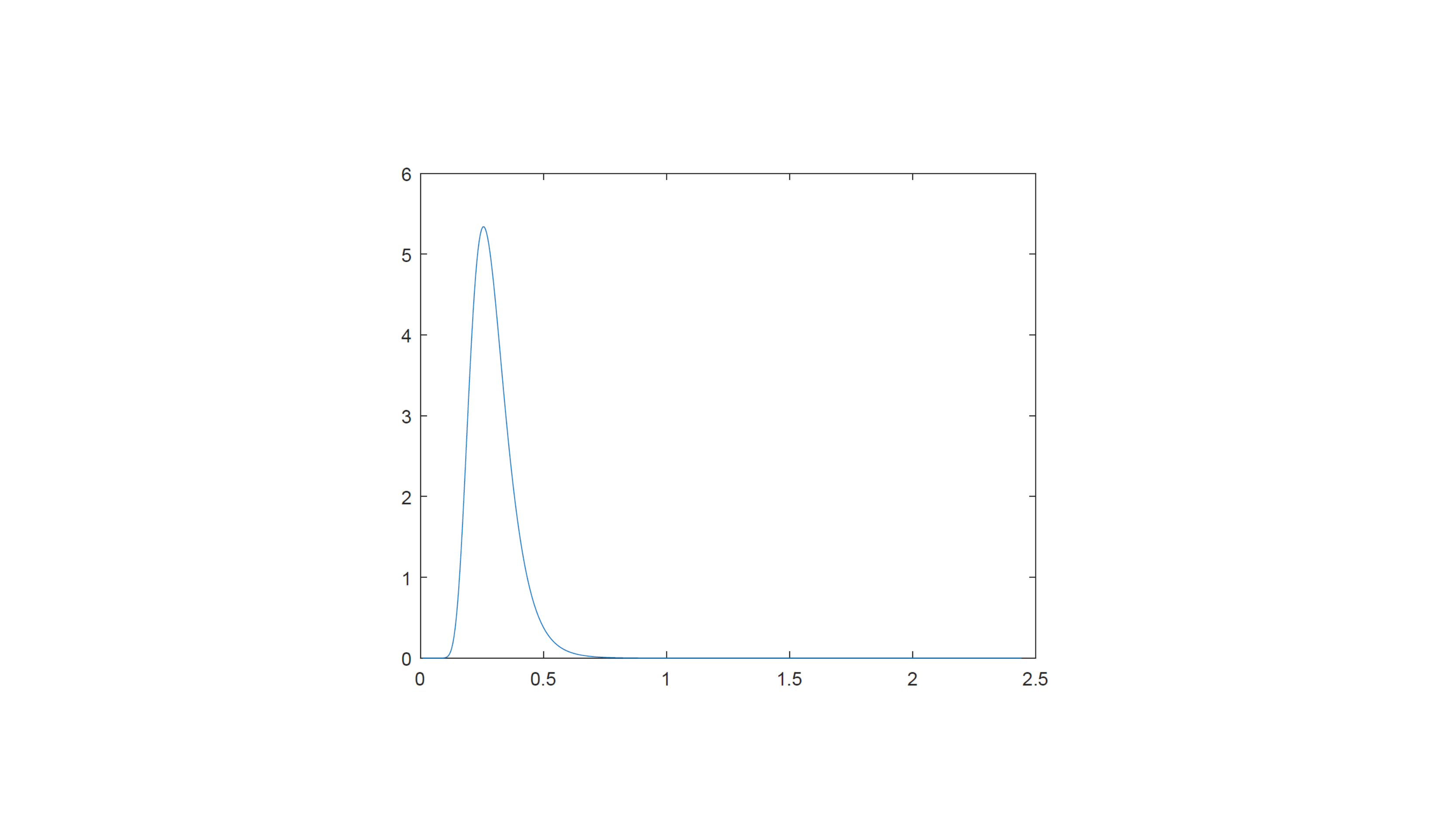}}}
\caption{  Figure (a): Margrabe prices as function of the parameter $v$ and its Taylor developments of first and second order around $v_0=0.25$. Figure (b):  Margrabe prices and its cubic splines approximation. Figure (c): Difference between Margrabe prices and its cubic spline approximation.  Figure (d): Empirical probability density prices  of  $v^+_T$ obtained after $10^5$ simulations}
\label{fig:taylorvsmc}
\end{center}
\end{figure}
Then, substituting equation (\ref{eq:mom2}) into equation (\ref{eq:taylorapproxuni}):
\begin{eqnarray*}
   \hat{C}^{(2)}_{MS}(v^*) & =& \hat{C}^{(1)}_{MS}(v^*)+\frac{1}{2}D^{2}C_{MT}(v^*)(v^*)^2\\
  &-& v^* D^{2}C_{MT}(v^*) \left[E_{\mathcal{Q}}(\sigma^{11+}_T)+E_{\mathcal{Q}}(\sigma^{22+}_T)-2  E_{\mathcal{Q}}(\sigma^{12+}_T) \right]\\
  &+& \frac{1}{2}D^{2}C_{MT}(v^*) \left[E_{\mathcal{Q}}(\sigma^{11+}_T)^2+2 E_{\mathcal{Q}}(\sigma^{11+}_T\sigma^{22+}_T) \right.\\
  &-&  4  E_{\mathcal{Q}}(\sigma^{11+}_T \sigma^{12+}_T)+ E_{\mathcal{Q}}(\sigma^{22+}_T)^2-  4  E_{\mathcal{Q}}(\sigma^{22+}_T \sigma^{12+}_T)\\
   &+& \left. 4   E_{\mathcal{Q}}(\sigma^{12+}_T)^2  \right]
  \end{eqnarray*}
  In figure \ref{fig:taylorvsmc}a) we show Margrabe price values as function of the variable $v$ (blue curve) on the interval $(0,1]$. For comparison, we also show Taylor polynomials of first (green line) and second  (red line) order around the  average log-price $v_0=0.25$ and  benchmark parameters specified in section 5. Both approximations are locally accurate but, for values farther from $v_0$, the  differences are shown to be significant.  It brings us the question of how often and how far departures from the average value occur?\\
    In figure \ref{fig:taylorvsmc}d) the pdf  of the random variable $v^+_T$ from  $10^5$ simulated values of $v$ is shown. It is estimated using a non-parametric Gaussian kernel. We observe that most values concentrate around the expansion point,  whereas a low but significant frequency appear far from the mean, indicating the presence of a  heavy-tailed probability distribution with positive skewness.\\
   In order to overcome this potential inconvenient we consider a cubic splines approximation. The latter adapts the expansion to the price behavior on different subintervals of $[a,b)$.\\
To compute the constrained moments of $v^+_T$ we use  proposition \ref{prop:constchf3}, equation (\ref{eq:dercnschf}). In order to simplify we assume initial values of the subordinators equal to zero.  \\
Therefore, we find that:
  \begin{eqnarray*}\nonumber
  m_{v^+_T}(0,a,b)&=& - \frac{i}{2 \pi} \int_{\mathbb{R}} f(y,a,b) g(-y)\;dy\\
  m_{v^+_T}(1,a,b)&=&  - \frac{1}{2 \pi} \int_{\mathbb{R}} f(y,a,b) ( D K^+_1(-My)+D K^+_2(-My))  g(-y)\;dy \\ \nonumber
   m_{v^+_T}(2,a,b)&=&   \frac{i}{2 \pi } \left[\int_{\mathbb{R}} f(y,a,b) ( D K^+_1(-My)+D K^+_2(-My))^2  g(-y)\;dy \right.\\ \nonumber
   &+& \left. \int_{\mathbb{R}} f(y,a,b) ( D^2 K^+_1(-My)+D^2 K^+_2(-My))  g(-y)\;dy \right]\\
    m_{v^+_T}(3,a,b)&=&   \frac{1}{2 \pi } \left[\int_{\mathbb{R}} f(y,a,b) ( D K^+_1(-My)+D K^+_2(-My)))^3  g(-y)\;dy \right.\\ \nonumber
   &+& 3 \int_{\mathbb{R}} f(y,a,b) ( D^2 K^+_1(-My)+D^2 K^+_2(-My))( D K^+_1(-My)+D K^+_2(-My)))  g(-y)\;dy\\
  &+& \left. \int_{\mathbb{R}} f(y,a,b) ( D^3 K^+_1(-My)+D^3 K^+_2(-My)) g(-y)\;dy \right]
  \end{eqnarray*}
  Higher moments are computed by recurrence:
  \begin{eqnarray*}
  m_{v^+_T}(k,a,b)&=& - \frac{i^{-k+1}}{2 \pi}\sum_{j=0}^{k-1}
    \left(\begin{array}{c}
    k-1\\ \label{eq:momconst2}
    j
    \end{array} \right)\\ \nonumber
    && \int_{\mathbb{R}}f(y,a,b)
 (D^{j+1}K^+_1(-My)+D^{j+1} K^+_2(-My)) D^{k-j-1}g(-y)\;dy \\
 && \text{for}\;\; k=2,3,\ldots
\end{eqnarray*}
Finally centered moments $\tilde{m}_{v^+_T}(k,a,b)$ are found from:
\begin{eqnarray*}
 \tilde{m}_{v^+_T}(k,a,b)&=& D^k \varphi_{v^+_T}(u)|_{u=0} \sum_{j=0}^{k}
    \left(\begin{array}{c}
    k\\
    j
    \end{array} \right) (-i a)^{k-j}m_{v^+_T}(j,a,b) 
\end{eqnarray*}
Some preliminary calculations of the functions $K_1$, $K_2$, $\varphi_{v^+_T}$ and their derivatives are shown in the appendix.\\
Alternatively, the constrained moments can be directly calculated from the pdf of $v^+_T$. In turn,  the pdf of $v^+_T$ is computed via its characteristic function by inverse FFT. To this end we define the grids:
\begin{eqnarray*}
  x_j &=& a+ \eta j,\;j=0,1,\ldots,n-1\\
  u_k &=& \delta k,\; k=0,1,\ldots,n-1
\end{eqnarray*}
where $\eta=\frac{b-a}{n}$ and $\delta=\frac{2 \pi}{b-a}$ are their respective lengths.\\
Hence, after applying the trapezoid rule:
\begin{eqnarray*}
  f_{v^+_T}(x_j) &=& \frac{1}{\pi} \int_0^{+\infty} Re(exp(-i x_j u)\varphi_{v^+_T}(u))\;du\\
  &\simeq&  \frac{1}{\pi} \sum_{k=0}^{n-1} w_k Re(exp(-ix_j u_k)\varphi_{v^+_T}(u_k)) \Delta u_k\\
  &=& \frac{1}{\pi} Re(\sum_{k=0}^{n-1} w_k \delta exp(-ia \delta k)\varphi_{v^+_T}(\delta k) exp(-i\frac{2 \pi}{n}jk))=fft(h_k)\\
\end{eqnarray*}
with $h_k= w_k \delta exp(-ia \delta k)\varphi_{v^+_T}(\delta k) $ and $w_0=w_{n-1}=\frac{1}{2}$ and equal to one otherwise. The expression $fft(h_k)$ denotes the Fast fourier Transform of the sequence $(h_k)$. \\
See \cite{wikt} for FFT applications in obtaining pdf's and \cite{hurw} for a detailed analysis of different quadratures.
\section{Numerical Results}
 We compare  the polynomial methods and the Monte Carlo approach to  pricing, for speed and accuracy. Our benchmark setting is given by a set of parameter values defining the model and the exchange contract. Contract parameters are selected  within a reasonable range, according to  usual practices, while the choosing  the model parameters is made rather arbitrary, just  with the purpose of illustrating the techniques.\\
 The benchmark parameters for the model are $a_{F}=(1, 1),  a_V=(1, 1),  b_F=(5, 5), b_V=(5, 5), \lambda_F=(1, 1), \lambda_V=(1, 1)$ and $S_0=(100,96)$. For the contract we set $ c=1, m=1, q=(0,0)$ and $T=1$. The interest rate is $r=0.04$.\\
 We take the loading matrix $A$ as an orthonormal rotation matrix with an angle $\theta, -\pi < \theta \leq \pi$, given by:
 \begin{eqnarray*}
A=\left(
\begin{array}{cc}
\cos \theta  & -\sin \theta  \\
\sin \theta  & \cos \theta %
\end{array}%
\right)
\end{eqnarray*}
 A direct Monte Carlo approach is costly as trajectories for both, the covariance process and the asset process, need to be simulated a large number of times. Alternatively, the iterative formula (\ref{eq:pricepi}) can be used to simplify calculations as, according to lemma \ref{returndistb}, conditionally on the covariance process the log-prices are normally distributed. It reduces the problem to calculate the discounted average of the price of an exchange contract under a deterministic time-dependent covariance, which still has a closed-form expression given in equation (\ref{eq:condprice}). Hence, only the Ornstein-Ulenbeck covariance process needs to be simulated. We call this procedure  a \textit{partial Monte Carlo} approach. \\
 Integrated Ornstein-Ulenbeck  process values at time $T$, denoted by  $\hat{F}_T^{l,+}$ and $\hat{V}_T^{l,+}$ are computed as discrete approximations of solutions of equations (\ref{eq:intouf}) and (\ref{eq:intouv}) with a step $\delta$, given respectively by:
        \begin{equation*}\label{eq:intoufapp}
    \hat{F}_T^{(l,+)}=\lambda_{F,l}^{-1} \left[(1-exp(-\lambda_{F,l}T)) F_0^{(l)}+ \sum_{k=1}^{n_1 } (1-exp(-\lambda_{F,l}(T-k \delta))) \Delta Z^{F,l}_{k} \right]
\end{equation*}
\begin{equation*}\label{eq:intouvapp}
    \hat{V}_T^{(l,+)}=\lambda_{V,l}^{-1} \left[(1-exp(-\lambda_{V,l}T)) V_0^{(l)}+ \sum_{k=1}^{n_2 } (1-exp(-\lambda_{V,l}(T-k \delta))) \Delta Z^{V,l}_{k} \right]
\end{equation*}
where:\\
$n_1=\left[ \frac{\lambda_{F,l} T}{\delta}\right]$ and $\Delta Z^{F,l}_{k}=Z^{F,l}_{k\delta}-Z^{F,l}_{(k-1)\delta}$,\\
 $n_2=\left[ \frac{\lambda_{V,l} T}{\delta}\right]$ and $\Delta Z^{V,l}_{k}=Z^{V,l}_{k\delta}-Z^{V,l}_{(k-1)\delta}$, $l=1,2$.\\
 The symbol $[x]$ is the integer part of the real value $x$.\\
  Next, the integrated covariance process is computed:
     \begin{equation*}\label{eq:covfactorpc2}
     \Sigma^+_T= diag( F^+_T)+A \; diag(V^+_T) A'
\end{equation*}
The price of the derivative contract is estimated from  equation (\ref{eq:pricepi}) by  the simulation of the covariance process and then computing the discounted average of the Margrabe prices evaluated at these simulated volatilities.\\
 As an illustration, in figure \ref{fig:ig}a) three trajectories of an Inverse Gaussian process $(Z_t)_{0 \leq t \leq 1}$ with parameters $a=1$, $b=5$  are shown. Next, we generate the corresponding Ornstein-Uhlenbeck process $(F_t)_{0 \leq t \leq 1}$ as shown in figure \ref{fig:ig}b), starting at zero.
\begin{figure}
\begin{center}
\subfigure[Three realizations of an Inverse Gaussian process with parameters $a=1, b=5$.]{
\resizebox*{5cm}{!}{\includegraphics{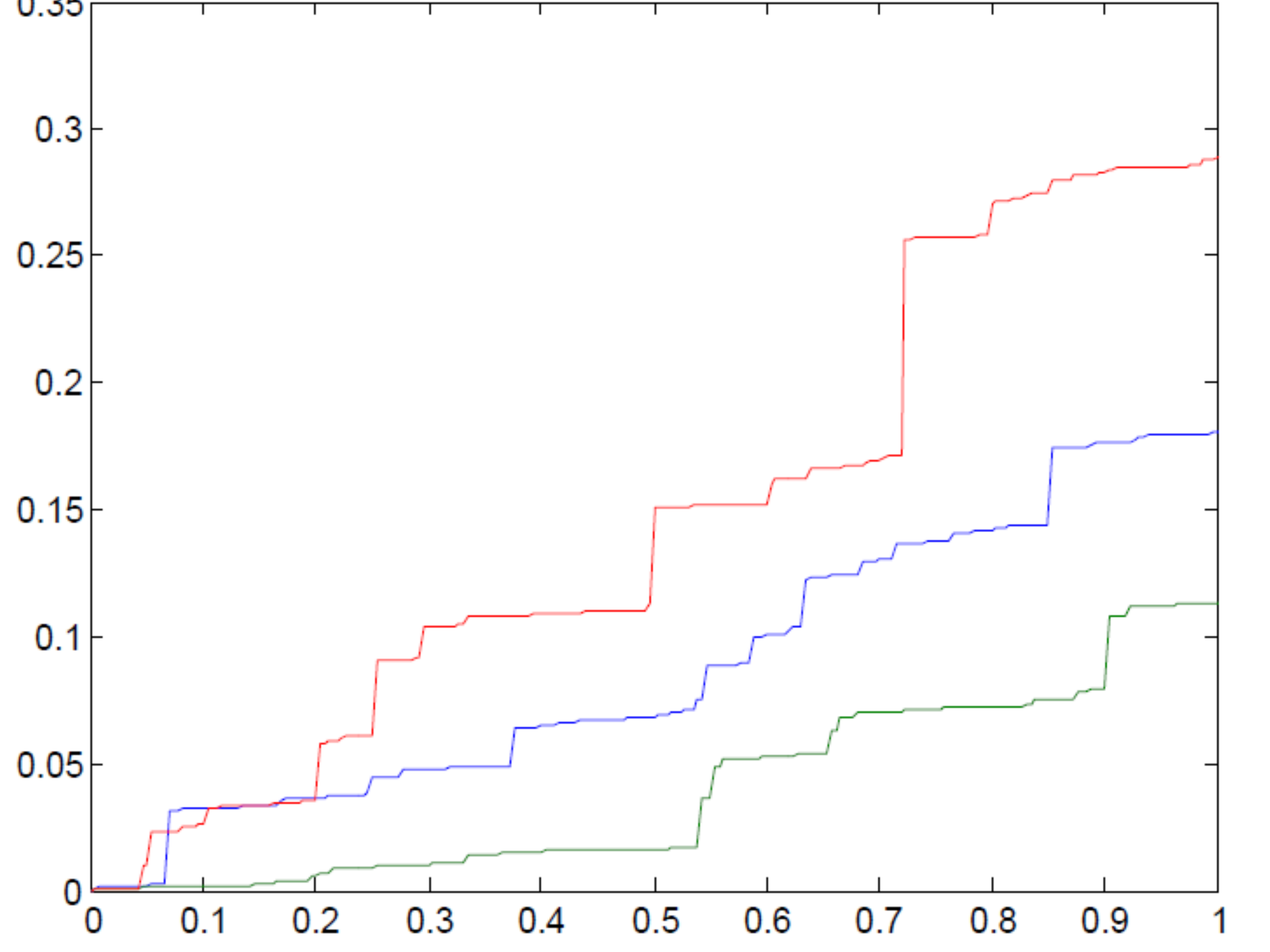}}}\hspace{5pt}
\subfigure[Three realizations of an Ornstein-Uhlenbeck  process with parameters $a=1, b=5$ and $\lambda=1$.]{
\resizebox*{6cm}{!}{\includegraphics{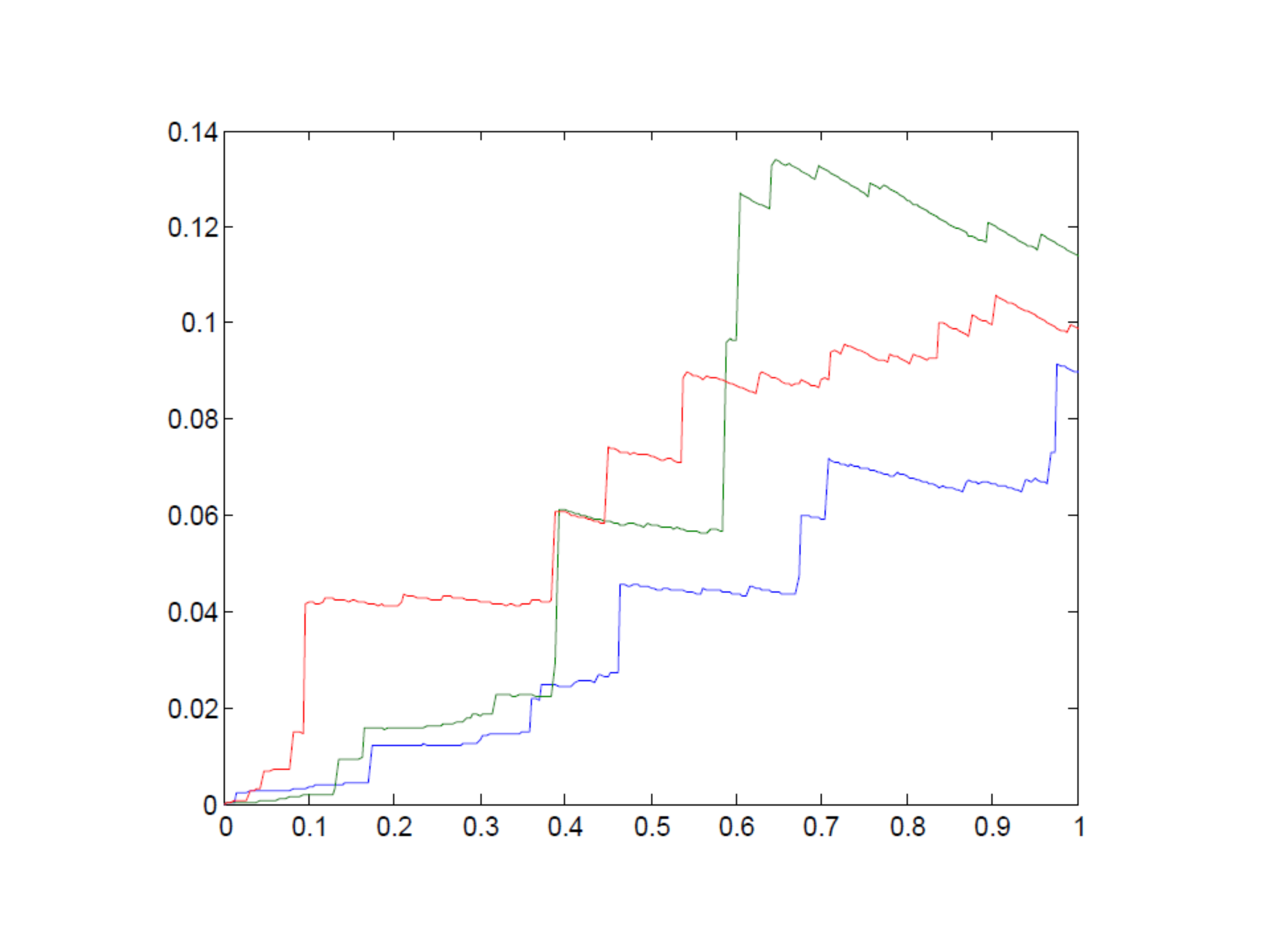}}}
\subfigure[Three realizations of the correlation process]{
\resizebox*{6cm}{!}{\includegraphics{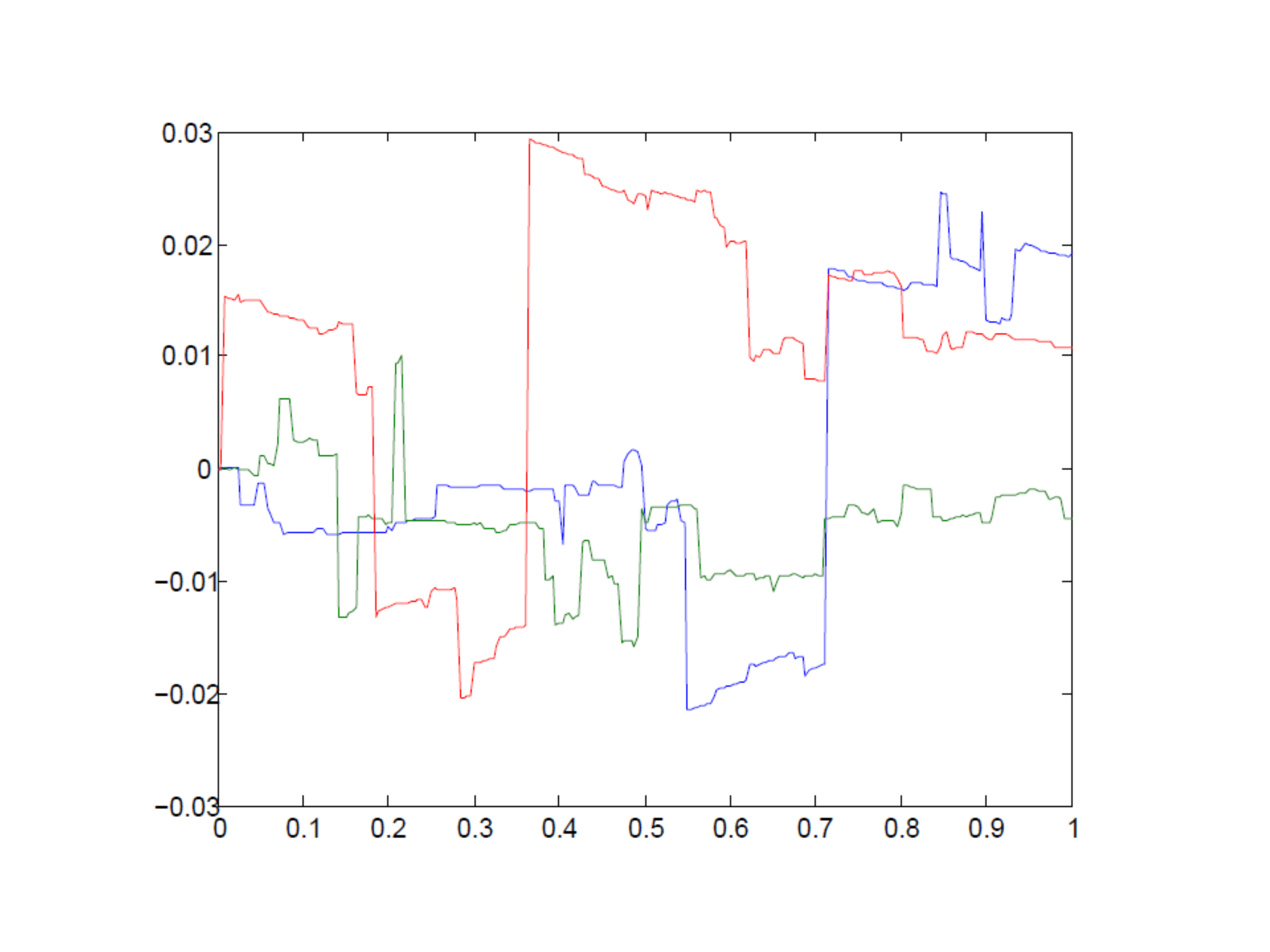}}}\hspace{5pt}
\caption{ Trajectories of the correlation process for selected parameters}
\label{fig:ig}
\end{center}
\end{figure}
Finally, in figure \ref{fig:ig}c), we show the trajectories of the correlation process obtained by dividing the covariance process $(\sigma^{12}_t)_{0 \leq t \leq 1}$ by the product of volatilities from the underlying assets, with a load matrix defined by the angle $\theta=\frac{\pi}{6}$. Both processes $(F_t)_{0 \leq t \leq 1}$ and $(V_t)_{0 \leq t \leq 1}$ are generated with the benchmark model parameters. Notice the correlation process exhibits jumps at random times, accounting for unexpected events.\\
In table \ref{tab:prices} different prices of the exchange contract  for some notable values of the angle in the loading matrix are shown. In the case of the Monte Carlo approach we also calculate a 95\% confidence interval for the price after 1 million simulations. We see that all methods, except the second order Taylor expansion, are within a similar range. First order Taylor price presents  inaccuracies for other parameters of the subordinator processes, while the ones based on cubic splines and FFT developments are quite stable, their relative average error are approximately 0.018 \% when compared with the estimated Monte Carlo price.\\
On the other hand, in table \ref{tab:running} we can see the execution time (in sec.) for all five methods. The code was written on a surface pro 4 i7 using MATLAB language. Cubic splines and FFT methods are, on average, respectively $16488.8$ and $18408.4$ times faster than Monte Carlo. The fact that FFT is slightly faster than cubic splines approximation comes at no surprise. It is well known that the former has a $0(n log n)$ complexity compared with a $0(n^2)$ of the latter.\\
In implementing both approaches some set of parameters driving the numerical approximations are required.  Namely, it is needed to decide on the truncation interval $[a,b)$, to fix a number of points in the grid for the Fourier transform and the number of points in the spline interpolation. The three factors require a compromise between accuracy and the amount of computation time. In the choice of the truncation interval we have tried to cover most of the support of the pdf of the random variable $v^+_T$ (see figure \ref{fig:taylorvsmc}d))which in turn depends on the parameters of both Inverse Gaussian subordinator processes. In our setting the interval $[0,5)$ was a reasonable tradeoff. Of course, most of the time these parameters need to be estimated. In any case a significant probability mass is present in a neighborhood of zero, therefore $a=0$ seems an evident choice.\\
For the number of points in the grid of the FFT calculation we have tested several powers of 2, ranging from $2^8$ to $2^{14}$. There is not a significant change in the calculated price across these values. We have set an intermediate value of $2^{12}$. After this figure the computation time explodes without a significant gain in accuracy. In order to implement the approach based on spline polynomials, we explored a range from $2^4$ to $2^8$ of interpolation points. After $2^6$ the price values are in close agreement with Monte Carlo and FFT. Numerical results improve if first derivatives at the end points of the expansion intervals are taken into account. They are available via formula \ref{eq:dercmt}.\\
Notice that the FFT approach refers to the manner the pdf of the stochastic volatility is obtained. It differs from standard FFT techniques based on the Fourier transform  of the payoff, see \cite{carr}.
\begin{table}[htp!]
  \centering
  \begin{tabular}{|c|c|c|c|c|c|}
    \hline
    $\theta$ & MC & Taylor (first order) & Taylor (sec.  order) &  Spl &FFT   \\ \hline
     $\frac{\pi}{6}$ & 21.8643 &22.1774   & 19.8441& 21.8969 & 21.8990\\
     &$(21.8589,21.8697)$&&&& \\ \hline
     $\frac{\pi}{3}$ & 21.8610& 22.1773 &  19.8441 &21.8969 & 21.8990\\
     &$(21.8557, 21.8664)$ & & & & \\ \hline
         $\frac{\pi}{2}$  & 21.9191 & 22.1773  &20.6861&  21.9506     &  21.9521 \\
       &$(21.9144, 21.9238)$& & & &  \\  \hline
       $\pi$ &21.9163 &22.1774  & 20.6861 &21.9506 & 21.9521\\
       & $(21.9117, 21.9210)$& & & & \\ \hline
     \end{tabular}
     	  \caption{Prices obtained from different approximations and different values of $\theta$ using the benchmark parameters}\label{tab:prices}
\end{table}

 \begin{table}[h!]
  \centering
  \begin{tabular}{|c|c|c|c|c|}
    \hline
     & MC &Taylor & Spl & FFT   \\ \hline
   Run time  & 179.6294 &0.010847  &0.010894 &0.009758   \\ \hline
  \end{tabular}\label{tab:running}
  \caption{Computer time (in seconds) for different pricing methods using the benchmark parameters and $\theta=\frac{\pi}{6}$}
\end{table}

\section{Conclusions}
We have discussed the pricing of exchange contracts under a dynamic model with a complex correlation structure capturing random jumps, heavy-tails, asymmetric and stochastic behavior. To this end we have proposed two approximate closed-form pricing methods based on cubic splines approximation and inverse Fast Fourier Transform. To the extend of the investigation and the range of parameters considered, in the model, the contract and the numerical method, both approaches provide accurate and fast pricing estimations.
\section{Acknowledgments}
This research was partially supported by the Natural Sciences and Engineering Research Council of Canada.
\section{Appendix}
\textbf{Proof of theorem \ref{chfbnmodel}}\\
From equations (\ref{eq:intouf}) and (\ref{eq:intouv}), and  Levy-Khinchine formula  we have, for $\tilde{\theta} \in \mathbb{R}$:
\begin{eqnarray*}
\varphi_{ F_t^{(l,+)}}(\tilde{\theta}) &=&  exp(i  \tilde{\theta} \lambda_{F,l}^{-1}(1-exp(-\lambda_{F,l}t))F_0^{(l)})E_{\mathcal{Q}} \left[ exp(i \tilde{\theta} \lambda_{F,l}^{-1} \int_0^t(1-exp(-\lambda_{F,l}(t-s)))dZ^{F,l}_{\lambda_{F,l}s}) \right]\\
&=& exp(i \tilde{\theta} \lambda_{F,l}^{-1}(1-exp(-\lambda_{F,l}t))F_0^{(l)}+\int_0^{\lambda_{F,l}t}\Psi_{Z^{F,l}}(\lambda_{F,l}^{-1}\tilde{\theta}(1-exp(-\lambda_{F,l}t+s)))ds)\\
\end{eqnarray*}
where the last equality follows from the identity:
\begin{equation*}
  E_{\mathcal{Q}} \left(exp(i \int_0^t f(s) dL_s) \right)=exp(\int_0^t \Psi_L(f(s))\;ds)
\end{equation*}
for a Levy process $L$ with characteristic exponent $\Psi_L$ and a continuous function  $f$, see for example \cite{cont}.
Similarly:
\begin{eqnarray*}
\varphi_{ V_t^{(l,+)}}(tr( \theta A C_{l}A'))&=& exp(i tr( \theta A C_{l}A') \lambda_{V,l}^{-1}(1-exp(-\lambda_{V,l}t))V_0^{(l)}\\
&+& \int_0^{\lambda_{V,l}t}\Psi_{Z^{V,l}}(tr(\theta A C_l A')\lambda_{V,l}^{-1}(1-exp(-\lambda_{V,l}t+s)))ds)
\end{eqnarray*}
Next, the characteristic function for the integrated covariance process can be computed as:
\begin{eqnarray*} \notag
\varphi_{\Sigma^+_t}(\theta)&=& E_{\mathcal{Q}} exp(  i~tr(\theta diag(F^+_t)))  E_{\mathcal{Q}} exp(  i~tr(\theta A\; diag(V^+_t) A')) \\ \notag
&=& exp(i  \sum_{l=1}^2 \lambda_{F,l}^{-1}(1-exp(-\lambda_{F,l}t))\theta_{ll}F_0^{(l)}+\int_0^{\lambda_{F,l}t}\Psi_{Z^{F,l}}(\lambda_{F,l}^{-1}\theta_{ll}
(1-exp(-\lambda_{F,l}t+s)))ds)\\
&& exp(i \sum_{l=1}^d tr(  \theta AC_{l}A') \lambda_{V,l}^{-1}(1-exp(-\lambda_{V,l}t))\\
&+& \sum_{l=1}^2  \int_0^{\lambda_{V,l}t}\Psi_{Z^{V,l}}(\lambda_{V,l}^{-1} tr(  \theta AC_{l}A')(1-exp(-\lambda_{V,l}t+s)))ds)\\
 \end{eqnarray*}
On the other hand, from equation (\ref{eq:charexpf}) we have:
\begin{eqnarray*}
\Psi_{Z^{F,l}}(\tilde{\theta} \lambda_{F,l}^{-1}(1-exp(-\lambda_{F,l}t+s)))&=&- \frac{a_{F,l}}{\sqrt{i \lambda_{F,l}}} \sqrt{2\, \tilde{\theta} \left( 1-exp(-\lambda_{F,l}\,t+s) \right) +i \lambda_{F,l}{b^{2}_{F,l}}}+a_{F,l}b_{F,l}
\end{eqnarray*}
Then, for $\tilde{\theta} \neq 0$:
\begin{eqnarray} \notag
&& I_F^{(l)}(\lambda_{F,l} t,\tilde{\theta})= -\frac{a_{F,l}}{\sqrt{ i \lambda_{F,l}}}  \int_0^{\lambda_{F,l}t} \sqrt{2\, \tilde{\theta} \left( 1-exp(-\lambda_{F,l}\,t+s)  \right) +i \lambda_{F,l}b^{2}_{F,l}} ds+\lambda_{F,l}a_{F,l}b_{F,l}t\\ \notag
&=& -{\frac {2\,a_{F,l}}{\sqrt { i \lambda_{F,l}}}}\left[ - T^{F,l}_2(\tilde{\theta})+T^{F,l}_1(\tilde{\theta})\arctan \left( \frac{T^{F,l}_2(\tilde{\theta})}{T^{F,l}_1(\tilde{\theta})}\right)\right. \\ \label{eq:zzz}
&+& \left.\sqrt{i \lambda_{F,l}}b_{F,l}-T^{F,l}_1(\tilde{\theta})\arctan \left( \frac{\sqrt{i \lambda_{F,l}}b_{F,l}}{T^{F,l}_1(\tilde{\theta})}\right)\right]+\lambda_{F,l} a_{F,l} b_{F,l} t
  \end{eqnarray}
  In what follows we use the identity:
  \begin{equation*}
    arctan z=-\frac{i}{2} \log \left( \frac{(1+iz)^2}{1+z^2} \right)\;\;z \in \mathbb{C}\\ \{ -i,i \}
  \end{equation*}
  where the complex-valued logarithmic function is defined according to  the principal value of the argument.\\
  Notice that:
  \begin{eqnarray*}
    \frac {\sqrt {i\lambda_{F,l}}b_{F,l}}{T^{F,l}_1(\tilde{\theta})}   &=& i \Longleftrightarrow   \tilde{\theta}=0\\
 \end{eqnarray*}
 Similarly:
  \begin{eqnarray*}
  \frac{T^{F,l}_2(\tilde{\theta})}{T^{F,l}_1(\tilde{\theta})} &=& i \Longleftrightarrow  2 \tilde{\theta} \, \left( 1-{{\rm e}^{-\lambda_{F,l}\,t}} \right)+i\lambda_{F,l} \,b^{2}_{F,l}=-(-2\,\tilde{\theta}-i\lambda_{F,l}\,b^{2}_{F,l})
   \end{eqnarray*}
 whose solution is again $\tilde{\theta} = 0$.\\
  An analysis at $-i$ leads to the same conclusion.\\
  Hence, for $\tilde{\theta} \neq 0$ we have:
  \begin{eqnarray*}
\arctan \left(\frac {\sqrt {i \lambda_{F,l}}b_{F,l}}{T^{F,l}_1(\tilde{\theta})}\right)&=&-\frac{i}{2}\log \left[ \frac{\left(1+\frac{i \sqrt{i \lambda_{F,l}}\,b_{F,l}}{T^{F,l}_1(\tilde{\theta})}\right)^2}{1+\left( \frac{ \sqrt{i \lambda_{F,l}}\,b_{F,l}}{T^{F,l}_1(\tilde{\theta})}\right)^2}\right]\\
&=& -\frac{i}{2} \left[ 2 \log \left(T^{F,l}_1(\tilde{\theta})+i  \sqrt{i \lambda_{F,l}}\,b_{F,l}\right)-\log ( -2 \tilde{\theta})\right]
\end{eqnarray*}
and
\begin{eqnarray*}
&&\arctan \left( {\frac {T^{F,l}_2(\tilde{\theta})}
{  T^{F,l}_1(\tilde{\theta})}} \right) =-\frac{i}{2} \left[ 2 \log \left(T^{F,l}_1(\tilde{\theta})+iT^{F,l}_2(\tilde{\theta}) \right)-\log ( -2 \tilde{\theta}exp(- \lambda t)) \right]
\end{eqnarray*}
Then, substituting the expressions  for arctan above into equation (\ref{eq:zzz}) we obtain equation (\ref{eq:igintf}) after noticing that:
\begin{eqnarray*}\label{eq:ifoulevyf}
 I_F^{(l)}(\lambda_{F,l} t,\tilde{\theta}) &=& {-\frac {2\,a_{F,l}}{\sqrt {i \lambda_{F,l}}}}\left[ -T^{F,l}_2(\tilde{\theta})+\sqrt{i \lambda_{F,l}}b_{F,l}\right.\\
 &+& \left.\frac{i}{2}T^{F,l}_1(\tilde{\theta}) \log \left( \frac{(T^{F,l}_1(\tilde{\theta})+i \sqrt{i \lambda_{F,l}}\,b_{F,l})^2( -2 \tilde{\theta}exp(- \lambda_{F,l} t))}{(T^{F,l}_1(\tilde{\theta})+i T^{F,l}_2(\tilde{\theta}))^2( -2 \tilde{\theta})}\right)\right]\\
 &+&\lambda_{F,l} a_{F,l} b_{F,l} t
 \end{eqnarray*}
  In a similar way we obtain equation (\ref{eq:igintv}). Notice that in the case of $I^{(l)}_V(\theta)$, the equality is valid for all values of the matrix $\theta$  except when $tr(\theta A C_l A')=0$, which is equivalent to $\theta_{ll}=0$.\\
\textbf{Proof of proposition 3.2}\\
The proof is straightforward. It is based on computing the first and second derivatives of the characteristic function evaluated at zero. \\
Hence, for the first moments we notice that:
\begin{eqnarray*}\label{eq:k1k}
  \frac{\partial  K^+_1(\theta)}{\partial \theta_{kk}} &=& i\lambda_{F,k}^{-1}(1-exp(-\lambda_{F,k}t))F_0^{(k)}+\frac{\partial  I_F^{(k)}(\lambda_{F,k} t, \theta_{kk})}{\partial \theta_{kk}} \\ \label{eq:k1kj}
  \frac{\partial  K^+_1(\theta)}{\partial \theta_{kj}}&=&0,\;\; k \neq j\\ \nonumber
   \frac{\partial  K^+_2(\theta)}{\partial \theta_{kj}}&=&i \sum_{l=1}^2 a_{kl}a_{jl} \lambda_{V,l}^{-1}(1-exp(-\lambda_{V,l}t))V_0^{(l)}\\
   &+&  \sum_{l=1}^2 a_{kl}a_{jl} \frac{\partial I_V^{(l)}(\lambda_{V,l}t,tr(\theta AC_{l}A'))}{\partial \theta_{kj}}  \label{eq:k2kj}
      \end{eqnarray*}
 Differentiating the characteristic function with respect to the components of the matrix $\theta$ and evaluating at $\theta=0$ we have:
  \begin{eqnarray*} \label{eq:moment1kk}
     E_{\mathcal{Q}}(\sigma^{kk+}_t) &=&  -i\left(\frac{\partial K^+_1(\theta) }{\partial \theta_{kk}}\mid_{\theta=0}+\frac{\partial K_2(\theta) }{\partial \theta_{kk}}\mid_{\theta=0} \right) \\ \nonumber
     && \\ \label{eq:moment1kj}
           E(\sigma^{12+}_t)&=&-i\frac{\partial K^+_2(\theta) }{\partial \theta_{12}}\mid_{\theta=0}
             \end{eqnarray*}
 From which,  equations (\ref{eq:moment1gralkkf}) and (\ref{eq:moment1gral12f}) follow.\\
On the other hand the second partial derivatives after evaluating at zero are:
\begin{eqnarray*}\label{eq:k1k0}
  \frac{\partial^2  K^+_1(\theta)}{\partial \theta_{kj}\partial \theta_{mn}}|_{\theta=0}&=& 0, \;\text{if at least one of subscripts is different}\\ \label{eq:k2kj0}
  \frac{\partial^2  K^+_1(\theta)}{\partial \theta^2_{kk}}|_{\tilde{\theta}=0}&=&\frac{\partial^2 I_F^{(l)}(\lambda_{F,l} t, \theta_{kk}) }{\partial \theta^2_{kk}}|_{\theta=0}\\
  \frac{\partial^2  K^+_2(\theta)}{\partial \theta_{kj} \partial \theta_{mn}}|_{\theta=0}&=&  \sum_{l=1}^2 a_{kl}a_{jl}a_{ml}a_{nl}\frac{\partial^2 I_V^{(l)}(\lambda_{V,l}t,tr(\theta AC_{l}A'))}{\partial \theta_{kj}\partial \theta_{mn}}|_{\theta=0}
   \end{eqnarray*}
 Therefore, we have:
   \begin{eqnarray*}\notag
      E_{\mathcal{Q}}(\sigma^{kj+}_t \sigma^{mn+}_t) &=& -\left(\frac{ \partial^2 K^+_1(\theta) }{\partial \theta_{kj} \partial \theta_{mn}} + \frac{\partial^2 K^+_2(\theta) }{\partial \theta_{kj} \partial \theta_{mn}}\right)\mid_{\theta=0}\\ \notag
    &-& \left(\frac{\partial K^+_1(\theta) }{\partial \theta_{kj}}+\frac{\partial K^+_2(\theta) }{\partial \theta_{kj}}\right)\mid_{\theta=0} \left(\frac{\partial K^+_1(\theta) }{\partial \theta_{mn}}+\frac{\partial K^+_2(\theta) }{\partial \theta_{mn}}\right)\mid_{\theta=0}\label{eq:crossmoments}
    \end{eqnarray*}
  In particular:
  \begin{eqnarray*}
      E_{\mathcal{Q}}(\sigma^{12+}_t)^2 &=&  - \frac{\partial^2 K^+_2(\theta) }{\partial \theta^2_{12}}\mid_{\theta=0}- \left( \frac{\partial K^+_2(\theta) }{\partial \theta_{12}}\right)^2\mid_{\theta=0}\\
      E_{\mathcal{Q}}(\sigma^{kk+}_t \sigma^{12+}_t) &=&   -\left(\frac{ \partial^2 K^+_1(\theta) }{\partial \theta_{kk} \partial \theta_{12}} + \frac{\partial^2 K^+_2(\theta) }{\partial \theta_{kk} \partial \theta_{12}}\right)\mid_{\theta=0}\\ \notag
    &-& \left(\frac{\partial K^+_1(\theta) }{\partial \theta_{kk}}+\frac{\partial K^+_2(\theta) }{\partial \theta_{kk}}\right)\mid_{\theta=0} \left(\frac{\partial K^+_1(\theta) }{\partial \theta_{12}}+\frac{\partial K^+_2(\theta) }{\partial \theta_{12}}\right)\mid_{\theta=0}\\
     &=& - \frac{\partial^2 K^+_2(\theta) }{\partial \theta_{kk} \partial \theta_{12}}\mid_{\theta=0}- \left(\frac{\partial K^+_1(\theta) }{\partial \theta_{kk}}+\frac{\partial K^+_2(\theta) }{\partial \theta_{kk}}\right)\mid_{\theta=0} \frac{\partial K^+_2(\theta) }{\partial \theta_{12}}\mid_{\theta=0}\label{eq:crossmoments2}
     \end{eqnarray*}
 Also:
   \begin{eqnarray*}
    E_{\mathcal{Q}}(\sigma^{11+}_t \sigma^{22+}_t) &=&-\frac{\partial^2 K_2(\theta) }{\partial \theta_{11} \partial \theta_{22}}\mid_{\theta=0}\\
    &-& \left(\frac{\partial K^+_1(\theta) }{\partial \theta_{11}}+\frac{\partial K^+_2(\theta) }{\partial \theta_{11}}\right)\mid_{\theta=0}\left(\frac{\partial K^+_1(\theta) }{\partial \theta_{22}}+\frac{\partial K^+_2(\theta) }{\partial \theta_{22}}\right)\mid_{\theta=0}
  \end{eqnarray*}
Finally:
 \begin{eqnarray*}
    E_{\mathcal{Q}}(\sigma^{kk+}_t)^2  &=& -\left(\frac{\partial K^+_1(\theta) }{\partial \theta_{kk}}\mid_{\theta=0}+\frac{\partial K^+_2(\theta) }{\partial \theta_{kk}}\mid_{\theta=0}\right)^2- \left(\frac{\partial^2 K^+_1(\theta) }{\partial \theta^2_{kk}}+\frac{\partial^2 K^+_2(\theta) }{\partial \theta^2_{kk}} \right)\mid_{\theta=0}
          \end{eqnarray*}
 \textbf{ Proof of proposition 3.3}\\
  Denote by $\hat{1}_{[a,b]}$ the Fourier transform of $1_{[a,b]}$.\\
 Notice that:
\begin{equation*}
  \hat{1}_{[a,b]}(y)=-i \frac{exp(iby)-exp(iay)}{y},\; y \neq 0
\end{equation*}
and equal to $b-a$ if $y=0$.\\
Hence:
\begin{eqnarray*}
 \varphi_{v^+_T}(u,a,b)  &=&  \int_{\mathbb{R}} exp(i u x) 1_{[a,b]}(x) \mathcal{Q}_{v^+_T}(dx)\\
  &=&  \frac{1}{2 \pi} \int_{\mathbb{R}} exp(i u x) \left[ \int_{\mathbb{R}} exp(-i y x) \hat{1}_{[a,b]}(y) \;dy \right] \mathcal{Q}_{v^+_T}(dx)\\
&=&  \frac{1}{2 \pi} \int_{\mathbb{R}}  \left[ \int_{\mathbb{R}} exp(i (u-y) x) \mathcal{Q}_{v^+_T}(dx) \right]\hat{1}_{[a,b]}(y) \;dy  \\
&=&  \frac{1}{2 \pi} \int_{\mathbb{R}} \varphi_{\Sigma^+_T}(M(u-y))\hat{1}_{[a,b]}(y) \;dy  \\
&=& - \frac{i}{2 \pi} \int_{\mathbb{R}} \varphi_{\Sigma^+_T}(M(u-y))f(y) \;dy
\end{eqnarray*}
Equation (\ref{eq:constchf2}) easily follows from theorem \ref{chfbnmodel}. The second part of the proposition follows from elementary differentiation.

          From which  equations (\ref{eq:moment2gralkkf2})-(\ref{eq:moment1122uneq3}) easily follow.\\
\textbf{Preliminary calculations for the constrained moments}\\
 Some preliminary calculations of the constrained moments are given below.
\begin{eqnarray*}
    T^{F,l}_1(-y)&=& \sqrt {2y-i\lambda_{F,l}\,{b^{2}_{F,l}}}\\
T^{F,l}_2(T,-y)&=&\sqrt {-2y(1-{{\rm e}^{-\lambda_{F,l}\,t}})+i\lambda_{F,l}\,{b^{2}_{F,l}}} \\
G^{F,l}(-y)&=&\log \left( exp(-\lambda_{F,l}t) \frac{(T^{F,l}_1(-y)+i \sqrt{i\lambda_{F,l}}\,b_{F,l})^2 }{(T^{F,l}_1(-y)+i T^{F,l}_2(t,-y))^2}\right)\\
\theta_l &=& tr(AMC_{l}A')= (a_{1l}-a_{2l})^2,\; l=1,2.\\
 T^{V,l}_1(-\theta_l y)&=& \sqrt {2\theta_l y-i\lambda_{V,l}\,{b^{2}_{V,l}}}\\
T^{V,l}_2(T,-\theta_l y)&=&\sqrt {-2\theta_l y(1-{{\rm e}^{-\lambda_{V,l}\,t}})+i\lambda_{V,l}\,{b^{2}_{V,l}}} \\
G^{V,l}(-\theta_l y)&=&\log \left( exp(-\lambda_{V,l}t) \frac{(T^{V,l}_1(-\theta_l y)+i \sqrt{i\lambda_{V,l}}\,b_{V,l})^2 }{(T^{V,l}_1(-\theta_l y)+i T^{V,l}_2(t,-\theta_l y))^2}\right)
 \end{eqnarray*}
 Which leads to:
 \begin{eqnarray*}
     I_F^{(l)}(\lambda_{F,l}T,-y)&=&{-\frac {2\,a_{F,l}}{\sqrt {i \lambda_{F,l}}}}\left[ -T^{F,l}_2(T,-y)+\sqrt{i \lambda_{F,l}}b_{F,l}+ \frac{i}{2}T^{F,l}_1(-y) G^{F,l}(-y) \right]+\lambda_{F,l} a_{F,l} b_{F,l} T\\
    &&\\ \nonumber
       I_V^{(l)}(\lambda_{V,l}T,- \theta_l y)&=&{-\frac {2\,a_{V,l}}{\sqrt {i \lambda_{V,l}}}}\left[ -T^{V,l}_2(T,- \theta_l y)+\sqrt{i \lambda_{V,l}}b_{V,l}+\frac{i}{2}T^{V,l}_1(- \theta_l y) G^{V,l}(-\theta_l y)\right]+\lambda_{V,l} a_{V,l} b_{V,l} T \\
       D I_F^{(l)}(\lambda_{F,l}T,-y)&=& {-\frac {a_{F,l}}{\sqrt {i \lambda_{F,l}}}} \int_0^{\lambda_{F,l} t} \frac{1-exp(-\lambda_{F,l} T+s)}{\sqrt{-2 y(1-exp(-\lambda_{F,l} t+s))+i \lambda_{F,l} b^2_{F,l} }}\;ds\\
       &=&  {\frac {a_{F,l}}{\sqrt {i \lambda_{F,l}}}} \int_{0}^{1-exp(-\lambda_{F,l} T)} \frac{v}{(v-1)\sqrt{-2 y v+i \lambda_{F,l} b^2_{F,l} }}\;dv\\
       D^n I_F^{(l)}(\lambda_{F,l}T,-y)&=&
       (-1)^{n} \prod_{k=2}^n (2k-3)\frac{a_{F,l}}{\sqrt{i \lambda_{F,l}} } \int_0^{-\lambda_{F,l} t} \frac{(1-exp(-\lambda_{F,l} T+s))^n}{(-2y (1-exp(-\lambda_{F,l} t+s))+i \lambda_{F,l} b^2_{F,l})^{\frac{2n-1}{2}}}\;ds\\
       &=&
       (-1)^{n+1} \prod_{k=2}^n (2k-3)\frac{a_{F,l}}{\sqrt{i \lambda_{F,l}} } \int_0^{1-exp(\lambda_{F,l} T)} \frac{v^n}{(v-1)(-2 y v+i \lambda_{F,l} b^2_{F,l} )^{\frac{2n-1}{2}}}\;dv\\
      D I_V^{(l)}(\lambda_{V,l}T,- \theta_l y)&=& {-\frac {a_{V,l}}{\sqrt {i \lambda_{V,l}}}}\theta^2_l \int_0^{-\lambda_{V,l} t} \frac{1-exp(-\lambda_{V,l} t+s)}{\sqrt{-2 \theta_l y(1-exp(-\lambda_{V,l} T+s))+i \lambda_{V,l} b^2_{V,l} }}\;ds\\
      &=&  {-\frac {a_{V,l}}{\sqrt {i \lambda_{V,l}}}}\theta^2_l \int_0^{1-exp(\lambda_{V,l} T)} \frac{v}{(v-1)\sqrt{-2 \theta_l y v+i \lambda_{V,l} b^2_{V,l} }}\;dv\\
     D^n I_V^{(l)}(\lambda_{V,l}T,- \theta_l y)&=&
       (-1)^n \prod_{k=2}^n (2k-3)\frac{a_{V,l}}{\sqrt{i \lambda_{V,l}} } \theta^{2n}_l\\
       && \int_0^{\lambda_{V,l} T} \frac{(1-exp(-\lambda_{V,l} T+s))^n}{(-2\theta_l y (1-exp(-\lambda_{V,l} T+s))+i \lambda_{V,l} b^2_{V,l})^{\frac{2n-1}{2}}}\;ds \\
        &=&
       (-1)^{n+1} \prod_{k=2}^n (2k-3)\frac{a_{V,l}}{\sqrt{i \lambda_{V,l}} } \theta^{2n}_l\\
       && \int_0^{1-exp(-\lambda_{V,l} T)} \frac{v^n}{(v-1)(-2 \theta_l y v+i \lambda_{V,l} b^2_{V,l} )^{\frac{2n-1}{2}}}\;dv
       \end{eqnarray*}
\begin{eqnarray*}
     K^+_1(-My)&=&\sum_{l=1}^2  I_F^{(l)}(\lambda_{F,l}T, -y)\\ \nonumber
   K^+_2(-My)&= & \sum_{l=1}^2   I_V^{(l)}(\lambda_{V,l}T,-\theta_l y)\\
      D^n K^+_1(-My)&=&\sum_{l=1}^2  D^n I_F^{(l)}(\lambda_{F,l}T, -y)\\ \nonumber
    D^n  K^+_2(-M y)&= & \sum_{l=1}^2    D^n  I_V^{(l)}(\lambda_{V,l}T,-\theta_l y)\\
\end{eqnarray*}

\end{document}